\renewcommand{\epsilon}{\varepsilon}
\renewcommand{\phi}{\varphi}
\newcommand{\overbar}[1]{\mkern 1.5mu\overline{\mkern-1.5mu#1\mkern-1.5mu}\mkern 1.5mu}
\newcommand{\CDUC}{\mathsf{CDUC}}
\newcommand{\DUC}{\mathsf{DUC}}
\newcommand{\DOC}{\mathsf{DOC}}
\newcommand{\EWP}{\mathsf{EWP}}
\newcommand{\PSD}{\mathsf{PSD}}
\newcommand{\PCP}{\mathsf{PCP}}
\newcommand{\TCP}{\mathsf{TCP}}
\newcommand{\MLDUI}[1]{\mathcal{M}_{#1}(\mathbb{C})^{\times 2}_{\mathbb{C}^{#1}}}
\newcommand{\MLDOI}[1]{\mathcal{M}_{#1}(\mathbb{C})^{\times 3}_{\mathbb{C}^{#1}}}
\newcommand{\M}[1]{\mathcal{M}_{#1}(\mathbb{C})}
\newcommand{\C}[1]{\mathbb{C}^{#1}}
\newcommand{\Mreal}[1]{\mathcal{M}_{#1}(\mathbb{R})}
\newcommand{\T}[1]{\mathcal{T}_{#1}(\mathbb{C})}
\newtheorem{theorem}{Theorem}[section]
\newtheorem{definition}[theorem]{Definition}
\newtheorem*{definition*}{Definition}
\newtheorem{proposition}[theorem]{Proposition}
\newtheorem{lemma}[theorem]{Lemma}
\newtheorem{remark}[theorem]{Remark}
\newtheorem{conjecture}[theorem]{Conjecture}
\newtheorem*{conjecture*}{Conjecture}
\theoremstyle{definition}
\definecolor{darkgreen}{rgb}{0,0.392,0}
\author{Satvik Singh}
\email{satviksingh2@gmail.com}
\address{\parbox{\linewidth}{Department of Applied Mathematics and Theoretical Physics,\\
University of Cambridge, Cambridge, United Kingdom.}}
\author{Ion Nechita}
\email{nechita@irsamc.ups-tlse.fr}
\address{Laboratoire de Physique Th\'eorique, Universit\'e de Toulouse, CNRS, UPS, France}
\title{The PPT$^2$ conjecture holds for all Choi-type maps}
\begin{document}
\begin{abstract}
    We prove that the PPT$^2$ conjecture holds for linear maps between matrix algebras which are covariant under the action of the diagonal unitary group. Many salient examples, like the Choi-type maps, depolarizing maps, dephasing maps, amplitude damping maps, and mixtures thereof, lie in this class. Our proof relies on a generalization of the matrix-theoretic notion of factor width for pairwise completely positive matrices, and a complete characterization in the case of factor width two. 
\end{abstract}

\maketitle

\tableofcontents

\section{Introduction}

Quantum channels model the evolution of quantum systems. Mathematically, they correspond to completely positive and trace preserving linear maps between matrix algebras. One important scenario in the rapidly developing field of quantum technologies is the distribution of quantum entanglement: which channels can be used to transmit a quantum particle which is entangled to another system in such a way that some entanglement in the total bipartite system is preserved? Quantum channels $\Phi$ which are useless for this task are dubbed \emph{entanglement breaking} \cite{horodecki2003entanglement}: the local application of $\Phi$ on any subsystem of a bipartite quantum state results in a separable (non-entangled) state. For qubit channels, this property is equivalent to a simpler \emph{PPT} property, which amounts to saying that both $\Phi$ and $\Phi \circ \top$ are completely positive, where $\top$ denotes the transposition map. The preceding equivalence ceases to hold for higher dimensional qudit channels. However, the \emph{PPT squared conjecture} posits that the composition of two arbitrary PPT linear maps must be entanglement breaking \cite{PPTsq, Christandl2018}. In particular, for a PPT channel, its composition with itself must be entanglement breaking. 

\begin{conjecture}
The composition of two arbitrary \emph{PPT} linear maps is entanglement breaking. 
\end{conjecture}

This conjecture is relevant for quantum information theory because it imposes constraints on the type of resources that can be distributed using quantum repeaters \cite{bauml2015limitations,christandl2017private}. Since it was first proposed in 2012 by M.~Christandl, the conjecture has garnered a lot of attention. It has been shown that the conjecture holds in the asymptotic limit, i.e., the distance between several iterates of a unital (or trace preserving) PPT map and the set of entanglement breaking maps tends to zero in the asymptotic limit \cite{Kennedy2017}. In \cite{Rahaman2018}, the authors proved that any unital PPT map becomes entanglement breaking after finitely many iterations of composition with itself; for other algebraic approaches, see \cite{Lami2015entanglebreak,hanson2020eventually,girard2020convex}. As noted above, the conjecture trivially holds for qubit maps. For the next dimension $d=3$, the conjecture has been proven independently in \cite{Christandl2018} and \cite{Chen2019}. For higher dimensions however, the validity of the conjecture still remains ambiguous \cite{collins2018ppt,jin2020investigation}. In infinite dimensional systems, the set of Gaussian maps has been shown to satisfy the conjecture \cite{Christandl2018}.

The main result of the current paper is the proof of this conjecture in the case when the two PPT maps are covariant with respect to the action of the \emph{diagonal unitary group}. More precisely, we consider the class of linear maps which are (conjugate-)invariant under the action of diagonal unitaries (see Definition \ref{def:DUC-CDUC-DOC}): $\forall X \in \M{d} \text{ and } U \in \mathcal{DU}_d$, we have either 
$$\Phi(UXU^*) = U^*\Phi(X)U \qquad \text{ or } \qquad \Psi(UXU^*) = U\Psi(X)U^*.$$
These maps, dubbed respectively, \emph{Diagonal Unitary Covariant} (DUC) and \emph{Conjugate Diagonal Unitary Covariant} (CDUC), were studied at length in \cite{Singh2020diagonal}. A variety of physically relevant classes of quantum channels are of this kind, like the \emph{depolarizing} and \emph{transpose depolarizing} channels, \emph{amplitude damping} channels, \emph{Schur multipliers}, etc. Several important properties of (C)DUC maps, such as complete positivity and copositivity, entanglement breaking property, and the like, were analyzed in detail. In this work, we focus on the PPT$^2$ conjecture for these maps, proving a \emph{stronger} version of the conjecture. The main technical tool in our proof is the characterization of a subclass of entanglement breaking covariant maps, which is related to the matrix-theoretic concept of \emph{factor width} \cite{Boman2005factor} for the cone of \emph{pairwise completely positive matrices} \cite{johnston2019pairwise}, see Theorem \ref{theorem:PSD-2} and \ref{theorem:PCP-2}. The following is an informal statement of our main result, Theorem \ref{thm:PPT2-C-DUC}:

\begin{theorem}
The composition of two arbitrary (conjugate) diagonal unitary covariant \emph{PPT} maps corresponds to a pairwise completely positive matrix pair with factor width two; in particular, it is entanglement breaking.
\end{theorem}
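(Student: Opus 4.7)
My plan is to exploit the matrix-pair parametrization of $\DUC$ and $\CDUC$ maps developed in \cite{Singh2020diagonal}, under which such a map $\Phi$ is encoded by a pair of matrices $(A,B) \in \MLDUI{d}$ capturing the action on the diagonal and off-diagonal sectors of $\M{d}$ (with a corresponding triple in $\MLDOI{d}$ in the $\CDUC$ case). Under this bijection, the PPT condition on $\Phi$ translates into the pair $(A,B)$ lying in the $\PSD$ cone, while entanglement breaking translates into the pair lying in the $\PCP$ cone. The composition $\Phi_1 \circ \Phi_2$ then descends to an explicit bilinear product on pairs, which I would write down directly from the action on Choi matrices; a minor case analysis based on the $\DUC$/$\CDUC$ types of the factors will identify the parametrization of the composition (since composing two $\DUC$ maps swaps the type to $\CDUC$ and so on), but the resulting algebraic operation is essentially the same.

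The central step is to show that this product operation sends two $\PSD$ pairs to a $\PCP$ pair of factor width at most two. The plan is: first, apply Theorem~\ref{theorem:PSD-2} to decompose each input $\PSD$ pair as a sum of $\PSD$ pairs of factor width at most two, each supported on a pair of indices $\{i,j\} \subset [d]$. Bilinearity of the composition product then reduces the whole problem to showing that the product of two such width-two $\PSD$ pieces is itself a $\PCP$ pair of factor width at most two. This is a completely explicit $2 \times 2$ block calculation: write the composed Choi pair entry-by-entry, identify the rank-one-type factorizations that arise, and verify each contribution lands in the $\PCP$ cone by matching it against the characterization provided by Theorem~\ref{theorem:PCP-2}.

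The main technical obstacle I anticipate is precisely this $2\times 2$ product computation. The $\PSD$ property of each width-two piece gives factorizations through rank-one vectors, but the $\PCP$ property requires a tighter joint factorization respecting the two-component structure of the pair, and it is not a priori clear that the product of two arbitrary width-two $\PSD$ pairs lands in the $\PCP$ cone at all. I expect the verification to collapse to non-negativity of a handful of explicit $2\times 2$ Hadamard-type expressions in the entries of the two factors, which is exactly where the width-two hypothesis buys enough rigidity. Once this step is in place, Theorem~\ref{theorem:PCP-2} produces a decomposition of the composed pair as a sum of $\PCP$ contributions each supported on a single index pair $\{i,j\}$, each corresponding to an entanglement breaking summand — yielding simultaneously the factor-width-two conclusion and, in particular, the entanglement breaking property of $\Phi_1 \circ \Phi_2$.
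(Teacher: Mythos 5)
Your overall architecture (parametrize the maps by pairs, compose via the explicit bilinear rule, conclude via Theorem~\ref{theorem:PCP-2}) matches the paper's, but the central step of your plan does not work. You propose to apply Theorem~\ref{theorem:PSD-2} to decompose each \emph{input} pair into factor-width-two pieces supported on index pairs $\{i,j\}$, and then invoke bilinearity. However, the PPT property of a (C)DUC map only forces $B \in \PSD_d$ (together with $A \in \EWP_d$ and the entrywise inequalities $A_{ij}A_{ji} \ge |B_{ij}|^2$); it does \emph{not} force $B \in \PSD_d^2$. The inclusion $\PSD_d^2 \subset \PSD_d$ is strict for $d \ge 3$ (Remark~\ref{rem:strict-inclusion}), and indeed if every PPT pair had factor width two, then Theorem~\ref{theorem:PCP-2} would make every PPT (C)DUC map entanglement breaking outright, which is false. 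Theorem~\ref{theorem:PSD-2} characterizes membership in $\PSD_d^2$; it is not a decomposition device for arbitrary PSD matrices, so it cannot be applied to the factors. Moreover, even granting such a decomposition, the $A$-component of the composition is the \emph{matrix product} $A_1A_2$, which does not respect supports on index pairs, so the reduction to a $2\times 2$ computation would not follow from bilinearity alone.

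The missing idea is that the factor-width-two structure lives in the \emph{output}, not the inputs, and it comes from the specific algebraic form of the composition rule. Writing $(\mathfrak A, \mathfrak B) = (A,B)\circ_2(C,D)$, one has $\mathfrak B = \operatorname{diag}(AC) + \widetilde B \odot \widetilde D$: the off-diagonal entries are the Hadamard products $B_{ij}D_{ij}$, while the diagonal entries $(AC)_{ii} = \sum_k A_{ik}C_{ki}$ contain, for each $j \ne i$, exactly the term $A_{ij}C_{ji}$ needed to dominate the $(i,j)$ entry. In the notation of Lemma~\ref{lemma:B-DD} this yields the explicit splitting
\begin{equation*}
    \mathfrak{B} = \operatorname{diag}(A\odot C) + \sum_{1\leq i<j \leq d} \left(\begin{array}{ c c }
        A_{ij}C_{ji} &  B_{ij}D_{ij} \\
        B_{ji}D_{ji}  &  A_{ji}C_{ij}
    \end{array} \right)_{i,j\in [d]},
\end{equation*}
where each $2\times2$ block is positive semi-definite because the entrywise PPT inequalities for the two factors multiply to give $A_{ij}C_{ji}\cdot A_{ji}C_{ij} \ge |B_{ij}D_{ij}|^2$. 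Hence $\mathfrak B \in \PSD_d^2$; since the composition of PPT maps is PPT, $(\mathfrak A, \mathfrak B)$ satisfies the necessary conditions of Lemma~\ref{lemma:pcp-necessary}, and Theorem~\ref{theorem:PCP-2} gives $(\mathfrak A,\mathfrak B)\in\PCP_d^2$. Your instinct that everything collapses to non-negativity of a handful of $2\times2$ Hadamard-type expressions is correct; what you are missing is that these arise from splitting the composed matrix $\mathfrak B$, not from splitting the factors.
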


The paper is organized as follows. In Section \ref{sec:prelim}, we provide several equivalent statements of the PPT$^2$ conjecture and also review some useful facts about diagonal unitary covariant maps; all of them are proven in \cite{Singh2020diagonal}. In Section \ref{sec:factor-width} we introduce the notion of factor width for pairwise and triplewise completely positive matrices. These tools are used in Section \ref{sec:PPTsq} to prove the PPT$^2$ conjecture for (C)DUC maps. Finally, in Section \ref{sec:open} we discuss some open problems and future directions for research. 

\section{Review of diagonal unitary covariant maps}\label{sec:prelim}
In this section, we will briefly review several key aspects from the theory of diagonal unitary and orthogonal covariant maps between matrix algebras. For a more comprehensive discussion and proofs of the results stated here, the readers are referred to our previous work \cite[Sections 6-9]{Singh2020diagonal}. 

Let us first quickly set up the basic notation. We use Dirac's \emph{bra-ket} notation for vectors $v\in \C{d}$ and their duals $v^*\in (\C{d})^*$ as \emph{kets} $\ket{v}$ and \emph{bras} $\bra{v}$, respectively. For $\ket{v},\ket{w}\in \C{d}$, the rank one matrix $vw^*$ is then represented as an \emph{outer-product} $\ketbra{v}{w}$. $\{ \ket{i}\}_{i=1}^d$ denotes the standard basis of $\C{d}$. We collect all $d\times d$ complex matrices into the set $\M{d}$. Within $\M{d}$, the cones of \emph{entrywise non-negative} and (hermitian) \emph{positive semi-definite} matrices are denoted by $\EWP_d$ and $\PSD_d$, respectively. We denote the adjoint (conjugate transpose) of $A\in \M{d}$ by $A^*$. Sets of pairs and triples of matrices in $\M{d}$ with equal diagonals are represented as follows
\begin{align}
    \MLDUI{d} &\coloneqq \{ (A,B) \in \M{d}\times \M{d} \, \big| \, \operatorname{diag}(A)=\operatorname{diag}(B) \} \label{eq:MLDUI}\\
    \MLDOI{d} &\coloneqq \{ (A,B,C) \in \M{d}\times \M{d}\times \M{d} \, \big| \, \operatorname{diag}(A)=\operatorname{diag}(B)=\operatorname{diag}(C)\} \label{eq:MLDOI}
\end{align}
The set of all linear maps $\Phi: \M{d}\rightarrow \M{d}$ is denoted by $\T{d}$. A map $\Phi\in \T{d}$ is called \emph{positive} if $\Phi(X)\in \PSD_d$ for all $X\in \PSD_d$. If $\operatorname{id}\otimes \Phi:\M{n}\otimes \M{d}\rightarrow \M{n}\otimes \M{d}$ is positive for all $n\in \mathbb{N}$, where $\operatorname{id}\in \T{n}$ is the identity map, then $\Phi\in \T{d}$ is called \emph{completely positive} (CP). Every CP map in $\T{d}$ admits a (non-unique) Kraus representation $\Phi(X) = \sum_{j=1}^k \mathfrak{A}_j X \mathfrak{A}_j^*$, where $\{\mathfrak{A}_j\}_{j=1}^d \subseteq \M{d}$ and $k\leq d^2$. If $\Phi\circ \top$ is completely positive, where $\top$ acts on $\M{d}$ as the matrix transposition (with respect to the standard basis in $\C{d}$), then $\Phi\in \T{d}$ is called \emph{completely copositive} (coCP). We say that $\Phi\in \T{d}$ is \emph{positive partial transpose} (PPT) if it is both CP and coCP. If, for all positive semi-definite $X\in \M{d}\otimes \M{d}$, $[\operatorname{id}\otimes \Phi] (X)$ lies in the convex hull of product matrices $A\otimes B$ with $A,B\in \PSD_d$, i.e. $[\operatorname{id}\otimes \Phi] (X)$ is \emph{separable}, we say that $\Phi\in \T{d}$ is \emph{entanglement-breaking}. By \emph{quantum channels}, we understand completely positive maps in $\T{d}$ which also preserve trace, i.e. $\operatorname{Tr}\Phi(X)=\operatorname{Tr}X, \,\, \forall X\in \M{d}$.

Now, in order to reformulate the PPT$^2$ conjecture in the language of bipartite matrices, we need to introduce the notion of locality for CP maps between tensor products of matrix algebras. For our purposes, it suffices to look at the tripartite setting. We say that a CP linear map $\Phi: \M{d}\otimes \M{d}\otimes \M{d} \rightarrow \M{d}\otimes \M{d}\otimes \M{d}$ is \emph{separable} if it can be expressed as a finite sum $\Phi = \sum_{i\in I} \Phi^1_i \otimes \Phi^2_i \otimes \Phi^3_i$, where $\Phi^j_i\in \T{d}$ are CP for all $i\in I$ and $j\in \{1,2,3\}$. Using the Kraus representation of CP maps in $\T{d}$, it is easy to see that any such separable operation itself admits a Kraus representation of the form $\Phi(X) = \sum_{j\in J} (\mathfrak{A}_j \otimes \mathfrak{B}_j \otimes \mathfrak{C}_j) X (\mathfrak{A}_j \otimes \mathfrak{B}_j \otimes \mathfrak{C}_j)^*$, where $\mathfrak{A}_j,\mathfrak{B}_j,\mathfrak{C}_j \in \M{d}$ for all $j \in J$. Also recall that a bipartite matrix $X\in \M{d}\otimes \M{d}$ is said to be \emph{positive under partial transpose} (PPT) if both $X$ and $X^\Gamma = [\operatorname{id}\otimes \top](X)$ are positive semi-definite. Equipped with the appropriate terminology, we are now prepared to state several equivalent formulations of the PPT$^2$ conjecture.

\begin{proposition} \label{prop:PPT2}
The following statements are equivalent:
\begin{enumerate}
    \item $\forall$ \emph{PPT} linear maps $\Phi_1,\Phi_2\in \T{d}$: $ \Phi_1\circ \Phi_2 \text{ is entanglement breaking}.$
    \item $\forall$ \emph{PPT} bipartite matrices $\rho, \sigma\in \M{d}\otimes \M{d}$: $$\operatorname{Tr}_{2,3} \{ (\rho\otimes \sigma) (\mathbb{I} \otimes \ketbra{e}{e} \otimes \mathbb{I} \} \text{ is separable}.$$
    \item $\forall$ \emph{PPT} bipartite matrices $\rho, \sigma \in \M{d}\otimes \M{d}$,
    $\forall$ tripartite separable \emph{CP} linear maps $\Lambda: \M{d}\otimes [\M{d}\otimes \M{d}]\otimes \M{d} \rightarrow \M{d}\otimes [\M{d}\otimes \M{d}]\otimes \M{d}$: 
    $$ \operatorname{Tr}_{2,3} \{ \Lambda (\rho\otimes \sigma) \} \text{ is separable}. $$
\end{enumerate}
Here, $\ket{e} = \sum_{i=1}^d \ket{i}\otimes \ket{i} \in \mathbb{C}^d \otimes \mathbb{C}^d$ is a maximally entangled vector, $\mathbb{I}\in\M{d}$ is the identity matrix, and $\operatorname{Tr}_{2,3}\{\cdot\}$ denotes partial trace over the middle two tensor factors. 
\end{proposition}

\begin{proof}
The equivalence of (1) and (2) can be readily established by using the Choi-Jamio{\l}kowski isomorphism $J:\T{d}\rightarrow \M{d}\otimes \M{d}$ defined as 
$$J(\Phi) = \sum_{i,j=1}^d \Phi(\ketbra{i}{j}) \otimes \ketbra{i}{j},$$ which identifies PPT and entanglement breaking linear maps in $\T{d}$ with PPT and separable matrices in $\M{d}\otimes \M{d}$, respectively. Since $J(\Phi_1 \circ \Phi_2) = \operatorname{Tr}_{2,3} \{ (J(\Phi_1)\otimes J(\Phi_2)) (\mathbb{I} \otimes \ketbra{e}{e} \otimes \mathbb{I} \}$, the equivalence of (1) and (2) becomes evident.

Let us now assume that (2) holds. Consider an arbitrary separable CP map $\Lambda$ as given in (3) with Kraus operators $\{\mathfrak{A}_j \otimes \mathfrak{C}_j \otimes \mathfrak{B}_j \}_{j\in J}$, where $\mathfrak{A}_j, \mathfrak{B}_j\in \M{d}$ and $\mathfrak{C}_j\in \M{d}\otimes \M{d}$ for all $j\in J$. Then, we can write 

$$\Lambda (\rho\otimes \sigma) = \sum_j (\mathbb{I}\otimes \mathfrak{C}_j \otimes \mathbb{I} ) (\rho_j\otimes \sigma_j ) (\mathbb{I}\otimes \mathfrak{C}_j \otimes \mathbb{I})^*,$$ 

where $\rho_j = (\mathfrak{A}_j \otimes \mathbb{I}) \rho (\mathfrak{A}_j \otimes \mathbb{I})^*$ and $\sigma_j = (\mathbb{I} \otimes \mathfrak{B}_j ) \sigma (\mathbb{I} \otimes \mathfrak{B}_j )^* $ are again PPT. Thus,

\begin{align*}
    \operatorname{Tr}_{2,3} \{\Lambda (\rho\otimes \sigma) \} &= \sum_j \operatorname{Tr}_{2,3} \{ (\rho_j\otimes \sigma_j) (\mathbb{I} \otimes \mathfrak{C}_j^* \mathfrak{C}_j  \otimes \mathbb{I})  \} \\
    &=  \sum_{j\in J} \sum_{i=1}^d \lambda_{ij} \operatorname{Tr}_{2,3} \{ (\rho_j \otimes \sigma_j) (\mathbb{I} \otimes \ketbra{c_{ij}}{c_{ij}} \otimes \mathbb{I}) \},
\end{align*}
where, for each $j\in J$, $\sum_{i=1}^d \lambda_{ij}\ketbra{c_{ij}}{c_{ij}}$ is the spectral decomposition of the positive semi-definite matrix $\mathfrak{C}_j^* \mathfrak{C}_j$. Now, by writing $\ket{c_{ij}}= (\mathfrak{C'}_{ij}\otimes \mathbb{I}) \ket{e}$ for $\mathfrak{C'}_{ij}\in \M{d}$ and redefining $\varrho_{ij} = (\mathbb{I} \otimes \mathfrak{C'}_{ij})^* \rho_{j} (\mathbb{I} \otimes \mathfrak{C'}_{ij})$ for all $i,j$ (which are all again PPT), we obtain: 
$$ \operatorname{Tr}_{2,3} \{\Lambda (\rho\otimes \sigma) \} = \sum_{j\in J} \sum_{i=1}^d \lambda_{ij} \operatorname{Tr}_{2,3} \{ (\varrho_{ij}\otimes \sigma_j) (\mathbb{I} \otimes \ketbra{e}{e} \otimes \mathbb{I})\},  $$
whose separability trivially follows from our assumption.

Finally, if we assume that (3) holds, then (2) follows by choosing $\Lambda$ to be defined by a single Kraus operator of the form $\mathbb{I}\otimes \ketbra{e}{e} \otimes \mathbb{I}$. 
\end{proof}

Let us take a second to interpret the above result. Assume that there are three spatially separated parties: Alice, Bob, and Charlie, such that Charlie shares bipartite states $\rho$ and $\sigma$ with Alice and Bob, respectively (see Figure~\ref{fig:swap}). The objective is to transfer any entanglement that Charlie shares with Alice and Bob separately (via $\rho$ and $\sigma$) to shared entanglement between Alice and Bob. We can think of this as a generalized entanglement `swapping' task. What the PPT$^2$ conjecture says in this context is that the above task is impossible if $\rho$ and $\sigma$ are PPT. No matter what local operations the parties might wish to perform on their subsystems, Proposition~\ref{prop:PPT2} guarantees that the resulting state shared by Alice and Bob is separable. This has drastic implications in the realm of quantum key distribution using repeater devices, where such entanglement swapping procedures are heavily employed, see \cite{christandl2017private, bauml2015limitations}.   

\begin{figure}[H]
    \centering
    \includegraphics[scale=1.3]{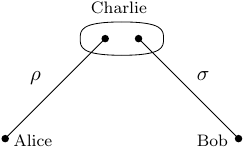}
    \caption{The setup for a generalized entanglement swapping task.}
    \label{fig:swap}
\end{figure}

We now define the different families of covariant maps in $\T{d}$. 

\begin{definition}\label{def:DUC-CDUC-DOC}
Let $\mathcal{DU}_d$ and $\mathcal{DO}_d$ denote the groups of diagonal unitary and diagonal orthogonal matrices in $\M{d}$, respectively. Then, a linear map $\Phi \in \mathcal{T}_d(\mathbb{C})$ is said to be
\begin{itemize}
    \item \emph{Diagonal Unitary Covariant (DUC)} if 
    $$ \forall X \in \M{d} \text{ and } U \in \mathcal{DU}_d: \qquad \Phi(UXU^*) = U^*\Phi(X)U,$$
    \item \emph{Conjugate Diagonal Unitary Covariant (CDUC)} if 
    $$ \forall X \in \M{d} \text{ and } U \in \mathcal{DU}_d: \qquad \Phi(UXU^*) = U\Phi(X)U^*,$$
    \item \emph{Diagonal Orthogonal Covariant (DOC)} if 
    $$ \forall X \in \M{d} \text{ and } O \in \mathcal{DO}_d: \qquad \Phi(OXO) = O\Phi(X)O.$$
\end{itemize}
\end{definition}

\begin{remark} \cite[Theorem 6.4]{Singh2020diagonal} \label{remark:DOC-LDOI}
Defintion~\ref{def:DUC-CDUC-DOC} can be reformulated in terms of bipartite Choi matrices having certain local diagonal unitary/orthogonal invariance properties:
\begin{itemize}
    \item $\Phi\in \T{d}$ is \emph{DUC}\phantom{C} $\iff (U \otimes  U) J(\Phi) (U^* \otimes  U^*) = J(\Phi) \qquad \forall U\in \mathcal{DU}_d$.
    \item $\Phi\in \T{d}$ is \emph{CDUC} $\iff (U \otimes  U^*) J(\Phi) (U^* \otimes  U) = J(\Phi) \qquad \forall U\in \mathcal{DU}_d$.
    \item $\Phi\in \T{d}$ is \emph{DOC}\phantom{C} $\iff (O \otimes  O) J(\Phi) (O \otimes O) = J(\Phi) \qquad \phantom{C}\forall O\in \mathcal{DO}_d$.
\end{itemize}
\end{remark}

The sets of DUC, CDUC and DOC maps in $\T{d}$ will be denoted by $\DUC_d, \CDUC_d$, and $\DOC_d$, respectively. Superscripts $i=1,2$ and $3$ will be used to distinguish between maps $\Phi^{(i)}$ in $\DUC_d, \CDUC_d$ and $\DOC_d$, respectively. Using the the structure of the invariant bipartite Choi matrices $J(\Phi^{(i)})$ from \cite[Proposition 2.3]{Singh2020diagonal}, one can parameterize the action of the corresponding covariant maps $\Phi^{(i)}$ on $\M{d}$ in terms of matrix triples $(A,B,C)\in \MLDOI{d}$ (Eq.~\eqref{eq:MLDOI}) as follows:
\begin{align}
    \Phi^{(1)}_{(A,B)}(X) &= \operatorname{diag}(A\ket{\operatorname{diag}X}) + \widetilde{B}\odot X^\top \label{eq:DUC-action} \\
    \Phi^{(2)}_{(A,B)}(X) &= \operatorname{diag}(A\ket{\operatorname{diag}X}) + \widetilde{B}\odot X \label{eq:CDUC-action} \\
    \Phi^{(3)}_{(A,B,C)}(X) &= \operatorname{diag}(A\ket{\operatorname{diag}X}) + \widetilde{B}\odot X + \widetilde{C}\odot X^\top \label{eq:DOC-action}
\end{align}
where $\widetilde{B}=B-\operatorname{diag}B, \, \widetilde{C}=C-\operatorname{diag}C$, and $\odot$ denotes the operation of \emph{Hadamard} (or \emph{entrywise}) product in $\M{d}$. In a quantum setting, where $X=\rho$ is a \emph{quantum state} ($\rho\in \PSD_d, \operatorname{Tr}\rho=1$) and $\Phi^{(i)}\in \T{d}$ are \email{quantum channels}, one can interpret the above actions by splitting them into two parts. The first part involves a \email{classical} diagonal mixing operation, which is nothing but a transformation on the space of probability distributions in $\mathbb{R}^d_{+}$: $\ket{\operatorname{diag}\rho} \mapsto A\ket{\operatorname{diag}\rho}$. The second part acts on the off-diagonal part of the input state by mixing the well-known actions of Schur Multipliers \cite[Chapters 3,8]{paulsen2002schur} and transposition maps in $\T{d}$. 

An important example of the above kind of maps is the \emph{Choi map} \cite{choi1975}, which was introduced in the `70s as the first example of a positive non-decomposable map: 
$$\Phi_{\mathsf{Choi}} : \M{3} \to \M{3}, \qquad \Phi_{\mathsf{Choi}}(X) = \left(   \begin{array}{ccc}
        X_{11}+X_{33} & -X_{12} & -X_{13}  \\
        -X_{21} & X_{11}+X_{22} & -X_{23} \\
        -X_{31} & -X_{32} & X_{22}+X_{33}
    \end{array}  \right).$$
It can be easily seen that the Choi map is a CDUC map $\Phi_{\mathsf{Choi}} = \Phi^{(2)}_{(A,B)}$, with 
$$A = \begin{pmatrix} 1 & 0 & 1 \\ 1 & 1 & 0 \\ 0 & 1 & 1 \end{pmatrix} \quad \text{ and } \quad B = \begin{pmatrix} \phantom{-}1 & -1 & -1 \\ -1 & \phantom{-}1 & -1 \\ -1 & -1 & \phantom{-}1 \end{pmatrix} = 2\mathbb{I}_3 - \mathbb{J}_3,$$
where $\mathbb J_3$ is the all-ones matrix. In fact, the action of all \emph{generalized Choi} maps in $\T{d}$ \cite{Ha2003choi,Chruscinski2007choi,Chruscinski2018choi} can be similarly parameterized by an arbitrary $A\in \EWP_d$ and $B=2\mathbb{I}_d-\mathbb{J}_d$, see \cite[Example 7.5]{Singh2020diagonal}. Besides Choi-type maps, the classes of (C)DUC and DOC maps contain many other important examples, like the \emph{depolarizing} and \emph{transpose depolarizing} maps, \emph{amplitude damping} maps, \emph{Schur multipliers}, etc. (see \cite[Section 7 and Table 2]{Singh2020diagonal} for a list of examples).

\begin{remark}
Note that the maps in $\DUC_d$ and $\CDUC_d$ are linked through composition by matrix transposition, i.e for $(A,B)\in \MLDUI{d}$, we have $\Phi^{(1)}_{(A,B)} = \Phi^{(2)}_{(A,B)}\circ \top$. We will shortly observe a reflection of this characteristic in the fact that the \emph{PPT} and entanglement-breaking properties of these maps entail an equivalent constraint on the corresponding matrix pairs $(A,B)\in \MLDUI{d}$. 
\end{remark}

\begin{remark}
For a matrix pair $(A,B)\in \MLDUI{d}$, we have $$  \Phi^{(1)}_{(A,B)} = \Phi^{(3)}_{(A,\operatorname{diag}A,B)} \qquad \Phi^{(2)}_{(A,B)} = \Phi^{(3)}_{(A,B,\operatorname{diag}A)} \qquad  $$
\end{remark}

Next, we introduce the cones of \emph{pairwise} and \emph{triplewise completely positive} matrices, which were first introduced in \cite{johnston2019pairwise} and \cite{nechita2021graphical}, respectively, as generalizations of the well-studied cone of completely positive matrices \cite{berman2003completely}. These will be used later in Propositions~\ref{prop:PPT-ent-DUC/CDUC} and \ref{prop:PPT-ent-DOC} to provide an equivalent description of the entanglement-breaking properties of our covariant families of maps. It is wise to point out that one should not confuse these notions with the earlier defined completely positive maps in $\T{d}$, which are different beasts altogether.

\begin{definition} \label{def:pcp}
A matrix pair $(A,B)\in \MLDUI{d}$ is said to be \emph{pairwise completely positive (PCP)} if there exist vectors $\{ \ket{v_n},\ket{w_n} \}_{n\in I}$ (for a finite index set $I$) such that 
\begin{equation*}
    A = \sum_{n\in I}\ketbra{v_n\odot \overbar{v_n}}{w_n\odot \overbar{w_n}}, \qquad B = \sum_{n\in I}\ketbra{v_n\odot w_n}{v_n\odot w_n}.
\end{equation*}
\end{definition}

\begin{definition} \label{def:tcp}
A matrix triple $(A,B,C)\in \MLDOI{d}$ is said to be \emph{triplewise completely positive (TCP)} if there exist vectors $\{ \ket{v_n},\ket{w_n} \}_{n\in I}$ (for a finite index set $I$) such that 
\begin{equation*}
    A = \sum_{n\in I}\ketbra{v_n\odot \overbar{v_n}}{w_n\odot \overbar{w_n}}, \qquad B = \sum_{n\in I}\ketbra{v_n\odot w_n}{v_n\odot w_n}, \qquad C = \sum_{n\in I}\ketbra{v_n\odot \overbar{w_n}}{v_n\odot \overbar{w_n}}. 
\end{equation*}
\end{definition}

The vectors $\{\ket{v_n},\ket{w_n}\}_{n\in I}$ above are said to form the PCP/TCP decomposition of the concerned matrix pair/triple. Notice that $(A,B,C)\in \TCP_d\implies (A,B),(A,C)\in \PCP_d$. It is easy to deduce that PCP and TCP matrices form closed convex cones, which we will denote by $\PCP_d$ and $\TCP_d$, respectively. For an extensive account of the convex structure of these cones, the readers should refer to \cite[Section 5]{Singh2020diagonal}. Several elementary properties of these cones are discussed in \cite[Sections 3,4]{johnston2019pairwise} and \cite[Appendix B]{nechita2021graphical}, respectively. We recall some important necessary conditions for membership in the $\PCP_d$ cone below. 

\begin{lemma} \label{lemma:pcp-necessary}
Let $(A,B)\in \PCP_d$. Then, $A\in \EWP_d$ and $B\in \PSD_d$. Moreover, the entrywise inequalities $A_{ij}A_{ji} \geq \vert B_{ij} \vert^2$ hold for all $i,j$.
\end{lemma}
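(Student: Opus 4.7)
The plan is to unpack the $\PCP$ decomposition from Definition~\ref{def:pcp} and read off each claim entrywise. Write $A = \sum_{n\in I}\ketbra{v_n\odot \overbar{v_n}}{w_n\odot \overbar{w_n}}$ and $B = \sum_{n\in I}\ketbra{v_n\odot w_n}{v_n\odot w_n}$ and compute the relevant matrix entries explicitly in terms of the scalars $(v_n)_i, (w_n)_i \in \mathbb C$.

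For entrywise non-negativity of $A$, the $(i,j)$-entry is
\[
A_{ij} \;=\; \sum_{n\in I} (v_n)_i\,\overbar{(v_n)_i}\,\overbar{(w_n)_j\,\overbar{(w_n)_j}} \;=\; \sum_{n\in I} |(v_n)_i|^2\,|(w_n)_j|^2,
\]
which is manifestly non-negative, giving $A\in\EWP_d$. For $B\in\PSD_d$, the expression for $B$ is literally a finite sum of rank-one positive semidefinite matrices $\ketbra{u_n}{u_n}$ with $\ket{u_n}=\ket{v_n\odot w_n}$, so $B \succeq 0$ follows immediately. (If the statement is intended to read $(A,B,C)\in\TCP_d$, the argument for $C$ is identical with $\ket{v_n\odot \overbar{w_n}}$ in place of $\ket{v_n\odot w_n}$.)

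The inequality $A_{ij}A_{ji}\geq |B_{ij}|^2$ is an application of Cauchy--Schwarz to the decomposition. Using the same expansion,
\[
B_{ij} \;=\; \sum_{n\in I} (v_n)_i(w_n)_i\,\overbar{(v_n)_j(w_n)_j} \;=\; \sum_{n\in I}\bigl[(v_n)_i\,\overbar{(w_n)_j}\bigr]\cdot\bigl[(w_n)_i\,\overbar{(v_n)_j}\bigr].
\]
Apply the Cauchy--Schwarz inequality on $\ell^2(I)$ with $a_n=(v_n)_i\,\overbar{(w_n)_j}$ and $b_n=(w_n)_i\,\overbar{(v_n)_j}$ to get
\[
|B_{ij}|^2 \;\leq\; \Bigl(\sum_{n\in I} |(v_n)_i|^2\,|(w_n)_j|^2\Bigr)\Bigl(\sum_{n\in I} |(w_n)_i|^2\,|(v_n)_j|^2\Bigr) \;=\; A_{ij}\,A_{ji},
\]
which is the desired entrywise bound. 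There is no real obstacle here; the only minor subtlety is to pair the scalars in the product $(v_n)_i(w_n)_i\overbar{(v_n)_j(w_n)_j}$ correctly so that the two resulting Cauchy--Schwarz factors assemble into $A_{ij}$ and $A_{ji}$ rather than into $A_{ii}A_{jj}$ or some other irrelevant quantity.
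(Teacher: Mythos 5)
Your proof is correct and is the standard direct argument (the paper itself states this lemma without proof, recalling it from the reference on pairwise completely positive matrices, where the same entrywise expansion plus Cauchy--Schwarz is used). You also rightly flag that the ``$C$'' in the statement is a typo carried over from the TCP setting, and your parenthetical disposes of it correctly.
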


Let us now describe the PPT and entanglement-breaking properties of the covariant maps in terms of constraints on the associated matrix pairs and triples.

\begin{proposition} \cite[Lemmas 6.11, 6.12]{Singh2020diagonal} \label{prop:PPT-ent-DUC/CDUC}
Let $(A,B)\in \MLDUI{d}$. Then, $\Phi^{(1)}_{(A,B)}$ is 
\begin{enumerate}
\item \emph{CP} $\iff A\in \EWP_d$, $B=B^*$ and $A_{ij}A_{ji}\geq |B_{ij}|^2 \,\, \forall i,j\iff \Phi^{(2)}_{(A,B)}$ is \emph{coCP}.
    \item \emph{coCP} $\iff A\in \EWP_d$ and $B\in \PSD_d\iff \Phi^{(2)}_{(A,B)}$ is \emph{CP}.
    \item \emph{PPT} $\iff A\in \EWP_d$, $B\in \PSD_d$ and $A_{ij}A_{ji}\geq |B_{ij}|^2 \,\, \forall i,j \iff  \Phi^{(2)}_{(A,B)}$ is \emph{PPT}.
    \item entanglement breaking $\iff (A,B)\in\PCP_d \iff\Phi^{(2)}_{(A,B)}$ is entanglement breaking.
\end{enumerate}
\end{proposition}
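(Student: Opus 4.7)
My plan is to read everything off from the bipartite Choi matrix $J(\Phi)$, using Remark~\ref{remark:DOC-LDOI} and the explicit action formulas \eqref{eq:DUC-action}--\eqref{eq:CDUC-action}.

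First I would tabulate the Choi matrices of $\Phi^{(1)}_{(A,B)}$ and $\Phi^{(2)}_{(A,B)}$ entry by entry. For $\Phi^{(2)}_{(A,B)}$ the only nonzero entries are $J(\Phi^{(2)})_{(k,i),(k,i)} = A_{ki}$ and $J(\Phi^{(2)})_{(i,i),(j,j)} = B_{ij}$; after a basis permutation that places the matched vectors $\ket{i}\otimes\ket{i}$ first, $J(\Phi^{(2)})$ is block diagonal with one $d\times d$ block equal to $B$ together with $d(d-1)$ scalar blocks $A_{ki}$ indexed by $k\neq i$. For $\Phi^{(1)}_{(A,B)}$, using $\Phi^{(1)}(\ket{i}\bra{j}) = B_{ji}\ket{j}\bra{i}$ when $i\neq j$, one finds that $J(\Phi^{(1)})$ is block diagonal with $d$ scalar blocks $A_{ii}$ together with $\binom{d}{2}$ principal $2\times 2$ blocks $\bigl(\begin{smallmatrix}A_{ij}&B_{ij}\\ B_{ji}&A_{ji}\end{smallmatrix}\bigr)$, one for each pair $i<j$. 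Items (1)--(3) now become a standard positivity check: $J(\Phi^{(1)})\succeq 0$ forces Hermiticity of the $2\times 2$ blocks (i.e., $B=B^*$), non-negativity of their diagonal entries ($A\in\EWP_d$), and $A_{ij}A_{ji}\geq|B_{ij}|^2$, and conversely these conditions make each block PSD; likewise, $J(\Phi^{(2)})\succeq 0$ is manifestly equivalent to $A\in\EWP_d$ together with $B\in\PSD_d$. The cross-equivalences between $\Phi^{(1)}$ and $\Phi^{(2)}$ in each item drop out of the identity $\Phi^{(1)}_{(A,B)} = \Phi^{(2)}_{(A,B)}\circ\top$ recorded in the remark above, since complete copositivity is by definition complete positivity composed with $\top$. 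Item (3) is then the conjunction of (1) and (2).

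For item (4), the reverse direction is direct: given a PCP decomposition $\{\ket{v_n},\ket{w_n}\}_{n\in I}$ of $(A,B)$, a term-by-term comparison against the Choi entries listed above shows $J(\Phi^{(2)}_{(A,B)}) = \sum_n \ketbra{v_n}{v_n}\otimes\ketbra{w_n}{w_n}$, which is manifestly separable, so $\Phi^{(2)}_{(A,B)}$ is entanglement breaking. For the forward direction, I would twirl a generic separable decomposition $J(\Phi^{(2)}_{(A,B)}) = \sum_n \ketbra{\phi_n}{\phi_n}\otimes\ketbra{\psi_n}{\psi_n}$ by the invariance $(U\otimes U^*)J(U^*\otimes U) = J$ for $U\in\mathcal{DU}_d$. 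The phase expectations $\mathbb{E}[u_k\overline{u_l}\overline{u_i}u_j]$ are nonzero exactly when $\{k,j\}=\{l,i\}$ as multisets; tallying the two surviving cases produces the identities $A = \sum_n\ketbra{\phi_n\odot\overline{\phi_n}}{\psi_n\odot\overline{\psi_n}}$ and $B = \sum_n\ketbra{\phi_n\odot\overline{\psi_n}}{\phi_n\odot\overline{\psi_n}}$, so setting $\ket{v_n}:=\ket{\phi_n}$, $\ket{w_n}:=\ket{\overline{\psi_n}}$ yields a PCP decomposition of $(A,B)$ in the sense of Definition~\ref{def:pcp}. The equivalence with $\Phi^{(1)}_{(A,B)}$ is then immediate: the partial transpose on the second tensor factor swaps $J(\Phi^{(1)})\leftrightarrow J(\Phi^{(2)})$ and preserves separability.

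The main obstacle will be the twirling step in item (4): one must carefully track the two surviving index patterns (the \emph{diagonal} case $k=l$, $i=j$ and the \emph{matched} case $k=i$, $l=j$ with $i\neq j$) without double-counting their overlap at $k=l=i=j$, and verify that the two resulting sums feed the $A$- and $B$-entries of the Choi matrix in exactly the format prescribed by Definition~\ref{def:pcp}. The remaining items reduce to reading off positive semidefiniteness from a handful of explicit, low-dimensional blocks.
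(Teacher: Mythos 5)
Your block-diagonalization of the two Choi matrices is correct and items (1)--(3) go through exactly as you describe (this proposition is imported by the paper from \cite[Lemmas 6.11, 6.12]{Singh2020diagonal}, and your route is the standard one). The cross-equivalences via $\Phi^{(1)}_{(A,B)}=\Phi^{(2)}_{(A,B)}\circ\top$ and the reduction of item (3) to (1)$\wedge$(2) are also fine. Item (4), however, contains a genuine error in the reverse direction. The asserted identity
$J(\Phi^{(2)}_{(A,B)})=\sum_n \ketbra{v_n}{v_n}\otimes\ketbra{w_n}{w_n}$
is false as a matrix equality: the right-hand side generically has nonzero entries at positions $((k,i),(l,j))$ outside the two surviving patterns ($k=l,\ i=j$ and $k=i,\ l=j$), whereas $J(\Phi^{(2)}_{(A,B)})$ vanishes there. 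For instance, with $d=2$ and a single term $\ket{v}=\ket{w}=(1,1)^\top$ one gets $A=B=\mathbb{J}_2$, so $J(\Phi^{(2)}_{(A,B)})$ has six nonzero entries while $\ketbra{v}{v}\otimes\ketbra{w}{w}=\mathbb{J}_4$ has sixteen. The correct statement is that $J(\Phi^{(2)}_{(A,B)})$ equals the \emph{twirl} $\mathcal{T}(X)=\int (U\otimes U^*)X(U^*\otimes U)\,dU$ of that separable matrix: the twirl kills precisely the off-pattern entries (by the same phase-expectation computation you invoke later), preserves separability as an average of local unitary conjugations over a convex closed set, and fixes $J(\Phi^{(2)}_{(A,B)})$ by Remark~\ref{remark:DOC-LDOI}. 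So the direction is salvageable, but only by running the twirling argument in both directions, not just the forward one.

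Two smaller points. First, in the forward direction the final assignment should be $\ket{w_n}:=\ket{\psi_n}$, not $\ket{\overline{\psi_n}}$: the $((i,i),(j,j))$ entry of $\ketbra{\phi_n}{\phi_n}\otimes\ketbra{\psi_n}{\psi_n}$ is $(\phi_n\odot\psi_n)_i\,\overline{(\phi_n\odot\psi_n)_j}$, which matches $B=\sum_n\ketbra{v_n\odot w_n}{v_n\odot w_n}$ from Definition~\ref{def:pcp} without any conjugation of $\psi_n$ (and $A$ is insensitive to this choice since $w_n\odot\overline{w_n}=\overline{w_n}\odot w_n$). Second, you should record explicitly that a map is entanglement breaking if and only if its Choi matrix is separable; this is what lets you pass between the map-level and matrix-level statements in both directions of item (4), and it is also the fact underlying your last sentence that the partial transpose $J(\Phi^{(1)})=(\operatorname{id}\otimes\top)J(\Phi^{(2)})$ preserves separability.
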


\begin{proposition} \cite[Lemma 6.13]{Singh2020diagonal} \label{prop:PPT-ent-DOC}
Let $(A,B,C)\in \MLDOI{d}$. Then, $\Phi^{(3)}_{(A,B,C)}$ is
\begin{enumerate}
\item \emph{CP} $\iff A \in \EWP_d$, $B \in \PSD_d$, $C=C^*$, and $A_{ij}A_{ji} \geq \vert C_{ij} \vert^2 \,\, \forall i,j$.
\item \emph{coCP} $\iff A \in \EWP_d$, $B=B^*$, $C \in \PSD_d$, and $A_{ij}A_{ji} \geq \vert B_{ij} \vert^2 \,\, \forall i,j$.
    \item \emph{PPT} $\iff A\in \EWP_d$, $B,C\in \PSD_d$ and $A_{ij}A_{ji}\geq \operatorname{max}\{|B_{ij}|^2,|C_{ij}|^2 \} \,\, \forall i,j$.
    \item entanglement breaking $\iff (A,B,C)\in \TCP_d$.
\end{enumerate}
\end{proposition}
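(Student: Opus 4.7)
My plan is to reduce all four claims to a direct analysis of the Choi matrix $J(\Phi^{(3)}_{(A,B,C)})$, whose block structure is dictated by the DOC covariance of Remark~\ref{remark:DOC-LDOI}. Evaluating \eqref{eq:DOC-action} on the basis matrices yields $\Phi^{(3)}_{(A,B,C)}(\ketbra{i}{i}) = \sum_k A_{ki}\ketbra{k}{k}$ and $\Phi^{(3)}_{(A,B,C)}(\ketbra{i}{j}) = B_{ij}\ketbra{i}{j} + C_{ji}\ketbra{j}{i}$ for $i \neq j$, so that
\[
J(\Phi^{(3)}_{(A,B,C)}) = \sum_{k,i} A_{ki}\ketbra{ki}{ki} + \sum_{i \neq j} B_{ij}\ketbra{ii}{jj} + \sum_{i \neq j} C_{ji}\ketbra{ji}{ij}.
\]
This matrix is block-diagonal with respect to $\C{d} \otimes \C{d} = \operatorname{span}\{\ket{ii}\}_i \oplus \bigoplus_{i < j} \operatorname{span}\{\ket{ij}, \ket{ji}\}$: by the LDOI constraint $\operatorname{diag}A = \operatorname{diag}B$ the first summand carries exactly the matrix $B$, while each $2\times 2$ block indexed by $i \neq j$ equals $\left(\begin{smallmatrix} A_{ij} & C_{ij} \\ C_{ji} & A_{ji} \end{smallmatrix}\right)$.

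For part (1), complete positivity is equivalent to $J(\Phi^{(3)}_{(A,B,C)}) \succeq 0$, which by block-diagonality splits into the requirements $B \in \PSD_d$ and PSDness of every $2\times 2$ block---the latter amounting to $A_{ij}, A_{ji} \geq 0$, $C = C^*$, and $A_{ij}A_{ji} \geq \vert C_{ij}\vert^2$. Nonnegativity of $\operatorname{diag}A$ is automatic from $B \in \PSD_d$ via the LDOI constraint, giving $A \in \EWP_d$. Part (2) follows at once from the identity $\Phi^{(3)}_{(A,B,C)} \circ \top = \Phi^{(3)}_{(A,C,B)}$, visible from \eqref{eq:DOC-action} together with $\operatorname{diag}(X^\top) = \operatorname{diag}(X)$, so (1) applied to $(A, C, B)$ delivers (2); part (3) is then the intersection of (1) and (2).

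Part (4) is the main content. I would use the Choi--Jamio\l kowski identification of entanglement breaking maps with separable Choi matrices, and exploit the finite DOC symmetry $\mathcal{DO}_d \cong \{\pm 1\}^d$. The key computation is to expand, for arbitrary $v, w \in \C{d}$, the symmetrized product
\[
\frac{1}{\vert \mathcal{DO}_d \vert}\sum_{O \in \mathcal{DO}_d} \ketbra{Ov \otimes Ow}{Ov \otimes Ow}.
\]
Since $O$ acts diagonally with signs, the coefficient of $\ketbra{ab}{cd}$ vanishes unless the multiset $\{a, b, c, d\}$ pairs up, producing precisely four on-support index patterns. Matching these against the three sums in $J(\Phi^{(3)}_{(A,B,C)})$ above, one sees that they yield exactly the rank-one contributions $\ketbra{v\odot \overbar{v}}{w\odot \overbar{w}}$, $\ketbra{v\odot w}{v\odot w}$, and $\ketbra{v\odot \overbar{w}}{v\odot \overbar{w}}$ appearing in Definition~\ref{def:tcp}, with the ``all-equal'' and ``two-equal-pair'' patterns jointly reconstructing the diagonal $A$-sum. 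Thus a TCP decomposition of $(A, B, C)$ gives a manifestly separable form for $J$, proving $\TCP_d \Rightarrow$ EB; conversely, averaging any separable decomposition over $\mathcal{DO}_d$ (legal because $J(\Phi)$ is DOC-invariant) produces a TCP decomposition with $v_n, w_n$ the factors of the original product vectors. The main bookkeeping obstacle will be verifying this four-patterns-to-three-sums correspondence cleanly, especially making sure the diagonal of $A$ is correctly assembled from both the all-equal and two-pair cases using the LDOI constraint $\operatorname{diag}A = \operatorname{diag}B = \operatorname{diag}C$.
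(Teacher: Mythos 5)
Your argument is correct, and it is essentially the standard proof of this result (which the present paper only cites, from \cite[Lemma 6.13]{Singh2020diagonal}): block-diagonalize the Choi matrix over $\operatorname{span}\{\ket{ii}\}_i\oplus\bigoplus_{i<j}\operatorname{span}\{\ket{ij},\ket{ji}\}$ for parts (1)--(3), and twirl over $\mathcal{DO}_d\cong\{\pm1\}^d$ for part (4). The bookkeeping you flag does close: averaging $\ketbra{Ov\otimes Ow}{Ov\otimes Ow}$ over sign vectors kills every entry whose index multiset $\{a,b,c,d\}$ does not pair up, and the three surviving off-support patterns $(a=b,\,c=d)$, $(a=c,\,b=d)$, $(a=d,\,b=c)$ reproduce exactly $\ketbra{v\odot w}{v\odot w}$ on the diagonal subspace, $\ketbra{v\odot\overbar{v}}{w\odot\overbar{w}}$ on the diagonal entries of the off-diagonal blocks, and $\ketbra{v\odot\overbar{w}}{v\odot\overbar{w}}$ on their anti-diagonal entries, with the all-equal position consistent with all three because $|v_i|^2|w_i|^2$ equals the $i$-th diagonal entry of each of the three rank-one matrices.
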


\section{Factor widths}\label{sec:factor-width}

The concept of factor width was first formalized in \cite{Boman2005factor} for real (symmetric) positive semi-definite matrices, although the idea had been in operation before, particularly in the study of completely positive matrices \cite[Definition 2.4]{berman2003completely}. Recall that for $\ket{v}\in \C{d}$, its support is defined as $\operatorname{supp}\ket{v}\coloneqq \{ i\in [d] : v_i\neq 0\}$. We define $\sigma(v)$ as the size of $\operatorname{supp}\ket{v}$, that is the number of non-zero coordinates of $\ket v$. A real positive semi-definite matrix $A\in \Mreal{d}$ is said to have \emph{factor width} $k$ if there exists vectors $\{\ket{v_n}\}_{n\in I}\subset \mathbb{R}^d$ with $\sigma(v_n)\leq k$ for each $n$ such that $A$ admits the following rank one decomposition: $A = \sum_{n\in I}\ketbra{v_n}{v_n}$. Besides being heavily implemented in the analysis of completely positive matrices \cite[Section 4]{berman2003completely}, the concept of factor width has found several applications in the field of conic programming and optimization theory \cite{ahmadi2019dsos}. In this section, we will extend the notion of factor width to the cones of complex (hermitian) positive semi-definite matrices $\PSD_d$, pairwise completely positive matrices $\PCP_d$ and triplewise completely positive matrices $\TCP_d$. In particular, we will obtain a complete characterization of matrices with factor width $2$ in $\PSD_d$ and $\PCP_d$, which will later play an instrumental role in proving the validity of the PPT squared conjecture for covariant maps in $\DUC_d, \CDUC_d$. Without further delay, let us now delve straight into the definition of factor width for matrices in $\PSD_d$, $\PCP_d$ and $\TCP_d$.
\begin{definition} \label{def:psd-2}
    A matrix $B\in \PSD_d$ is said to have \emph{factor width} $k\in \mathbb{N}$ if it admits a rank one decomposition $B=\sum_{n\in I}\ketbra{v_n}{v_n}$, where $\{ \ket{v_n} \}_{n\in I} \subset \C{d} $ are such that $\sigma(v_n)\leq k$ for each $n\in I$.
\end{definition}

The notion of factor width for positive semidefinite matrices has been considered in \cite{ringbauer2018certification}, in relation to measures of coherence for density matrices. This notion has a straightforward generalization for PCP and TCP matrices.

\begin{definition} \label{def:pcp/tcp-2}
    A matrix pair $(A,B)\in \PCP_d$ (resp. triple $(A,B,C)\in \TCP_d$) is said to have \emph{factor width} $k\in \mathbb{N}$ if it admits a \emph{PCP} (resp. \emph{TCP}) decomposition with vectors $\{\ket{v_n},\ket{w_n} \}_{n\in I}\subset \C{d}$ such that $\sigma(v_n\odot w_n)\leq k$ for each $n\in I$.
\end{definition}
The cones of factor width $k$ matrices in $\PSD_d, \PCP_d$ and $\TCP_d$ will be denoted by $\PSD_d^k, \PCP_d^k$, and $\TCP_d^k$, respectively. It should be clear from the definitions that $(A,B,C)\in \TCP_d^k\implies (A,B),(A,C)\in \PCP_d^k \implies B,C\in \PSD_d^k$ and that the cones $\PSD_d^k, \PCP_d^k$, and $\TCP_d^k$ are stable by direct sums. The following sequences of inclusions are also trivial consequences of the definitions:
\begin{align}
    \PSD_d^1 \subset \PSD_d^2 \subset \dots \subset \PSD_d^d = \PSD_d \\
    \PCP_d^1 \subset \PCP_d^2 \subset \dots \subset \PCP_d^d = \PCP_d \\ 
    \TCP_d^1 \subset \TCP_d^2 \subset \dots \subset \TCP_d^d = \TCP_d
\end{align}

\begin{remark}\label{rem:strict-inclusion}
For all $k < d$, the inclusions
$$\PSD_d^k \subset \PSD_d, \qquad \PCP_d^k \subset \PCP_d, \qquad \TCP_d^k \subset \TCP_d$$
are strict. This can be seen by considering extremal rays of the cones (see \cite[Theorem 5.13]{Singh2020diagonal}) generated by vectors of full support.
\end{remark}

For $k=1$, it is evident that $\PSD_d^1$ is precisely the set of diagonal matrices in $\EWP_d$. It is equally easy to deduce that $\PCP_d^1$ (resp. $\TCP_d^1$) contain matrix pairs $(A,B)\in \PCP_d$ (resp. triples $(A,B,C)\in \TCP_d$) where $A\in \EWP_d$ and $B=\operatorname{diag}A$ (resp. $A\in \EWP_d$ and $B=C=\operatorname{diag}A$). 

The $k=2$ case is more interesting, and we must familiarize ourselves with some matrix-theoretic terminology before we begin to deal with it. Let us start with the definitions of the so-called \emph{scaled diagonally dominant} and M-matrices. In what follows, $\mathbb{I}_d$ denotes the identity matrix in $\M{d}$. For $B\in \M{d}$ and $k\in \mathbb{N}$, we define a hierarchy of comparison matrices entrywise as: 
\begin{equation}\label{eq:M-k}
    M_k(B)_{ij} = \begin{cases}
    \, k|B_{ij}|, \quad &\text{if } i = j\\
    -|B_{ij}|, \quad &\text{otherwise }
\end{cases}
\end{equation}
Notice that $M_1(B)$ coincides with the usual comparison matrix $M(B)$ for $B\in \M{d}$
\begin{definition}
    A matrix $B\in \M{d}$ is called \emph{diagonally dominant (DD)} if $|B_{ii}|\geq \sum_{j\neq i}|B_{ij}|$ and $|B_{ii}|\geq \sum_{j\neq i}|B_{ji}| \,\, \forall i$. For $B\in \M{d}$, if there exists a positive diagonal matrix $D$ such that $DBD$ is diagonally dominant, then $B$ is called \emph{scaled diagonally dominant (SDD)}.
\end{definition}

Note that, for a hermitian matrix $B$, the two conditions $|B_{ii}|\geq \sum_{j\neq i}|B_{ij}|$ and $|B_{ii}|\geq \sum_{j\neq i}|B_{ji}|$ are equivalent.
Also note that hermitian DD and SDD matrices are clearly positive semi-definite.
\begin{definition}
    A matrix $B=s\,\mathbb{I}_d-P$ with $s\geq 0$ and $P\in \EWP_d$ is called an \emph{M}-matrix if $s\geq \rho(P)$, where $\rho(P)$ denotes the spectral radius of $P$.
\end{definition}

From the above definition, it is easy to see that if $P\in \EWP_d$ is symmetric, then $B=s\,\mathbb{I}_d-P$ is an M-matrix if and only if it is positive semi-definite. Before proceeding further, let us equip ourselves with an important result from the Perron-Frobenius theory of non-negative matrices \cite[Chapter 8]{Horn2012matrix}. Recall that a matrix $B\in \M{d}$ is \emph{reducible} if it is permutationally similar to a block matrix in Eq.~\eqref{eq:reducible} (where $B_1,B_3$ are square matrices) and \emph{irreducible} otherwise.
\begin{equation} \label{eq:reducible}
    \left(
\begin{array}{ c c }
   B_1 & B_2 \\
   0 & B_3
\end{array}
\right)
\end{equation}
\begin{lemma}\label{lemma:Perron}
For an irreducible non-negative matrix $P\in \EWP_d$, the spectral radius $\rho(P)$ is an eigenvalue (called \emph{Perron} eigenvalue) of unit multiplicity with a positive eigenvector (called \emph{Perron} eigenvector) $\ket{p}\in \mathbb{R}^d_+$: $P\ket{p} = \rho(P)\ket{p}$.
\end{lemma}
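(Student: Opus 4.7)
My plan is to establish this classical Perron--Frobenius statement via the Collatz--Wielandt characterization, in four logical steps: (i) produce a non-negative eigenvector at some maximal value $\lambda^\star$; (ii) upgrade it to a strictly positive vector using irreducibility; (iii) identify $\lambda^\star$ with $\rho(P)$; and (iv) establish unit multiplicity.

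For step (i), I would work on the probability simplex $\Delta = \{x \in \mathbb{R}^d_+ : \sum_i x_i = 1\}$ and consider the functional $r(x) = \max\{\alpha \geq 0 : Px \geq \alpha x\}$, which is upper semi-continuous and hence attains its supremum $\lambda^\star$ at some $p^\star \in \Delta$. If $Pp^\star - \lambda^\star p^\star$ were nonzero and non-negative, I would apply the operator $(\mathbb{I}_d + P)^{d-1}$---which has strictly positive entries whenever $P$ is irreducible and non-negative, a standard combinatorial consequence of irreducibility---to obtain a strictly positive vector $q$ satisfying $Pq - \lambda^\star q \geq 0$ with at least one strict inequality, and then find $\epsilon > 0$ with $Pq \geq (\lambda^\star + \epsilon) q$, contradicting the maximality of $\lambda^\star$. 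For step (ii), if $S \coloneqq \operatorname{supp}(p^\star) \subsetneq [d]$, then $\lambda^\star (p^\star)_i = (Pp^\star)_i = 0$ for $i \notin S$ forces $P_{ij} = 0$ whenever $i \notin S$ and $j \in S$, which, after reordering the basis, exhibits $P$ in the reducible form of Eq.~\eqref{eq:reducible}---a contradiction.

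For step (iii), given any complex eigenvalue $\mu$ of $P$ with eigenvector $v \in \mathbb{C}^d$, the entrywise triangle inequality applied to $Pv = \mu v$ yields $P|v| \geq |\mu| \, |v|$, where $|v|$ denotes the entrywise absolute value; hence $|\mu| \leq r(|v|/\|v\|_1) \leq \lambda^\star$, giving $\rho(P) \leq \lambda^\star$. The reverse inequality is immediate since $\lambda^\star$ is itself an eigenvalue, so $\lambda^\star = \rho(P)$.

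The main obstacle is step (iv). Geometric simplicity I would handle by observing that, given any real eigenvector $q$ at eigenvalue $\lambda^\star$, a suitable $t \in \mathbb{R}$ makes $p^\star + tq$ a nonzero non-negative eigenvector with a vanishing coordinate, which by step (ii) is impossible unless $q$ is a scalar multiple of $p^\star$; complex eigenvectors reduce to the real case by separating real and imaginary parts. For algebraic simplicity, I would apply steps (i)--(iii) to $P^\top$ (which is also irreducible and non-negative) to produce a strictly positive left Perron eigenvector $\ell^\star$, i.e.\ $(\ell^\star)^\top P = \lambda^\star (\ell^\star)^\top$, and then observe that any putative generalized eigenvector $y$ satisfying $(P - \lambda^\star \mathbb{I}_d) y = p^\star$ would yield $(\ell^\star)^\top p^\star = (\ell^\star)^\top (P - \lambda^\star \mathbb{I}_d) y = 0$, contradicting the strict positivity of both $p^\star$ and $\ell^\star$.
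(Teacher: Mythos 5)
The paper does not actually prove this lemma: it is the classical Perron--Frobenius theorem, recalled as a known tool from \cite[Chapter 8]{Horn2012matrix}. Your proposal therefore necessarily takes a different (self-contained) route, namely the standard Collatz--Wielandt/Wielandt argument, and it is correct and complete. The key points all check out: the superlevel sets $\{x\in\Delta : Px\ge \alpha x\}$ of $r$ are closed, so $r$ is upper semi-continuous and attains its maximum $\lambda^\star$ on the compact simplex; the strict positivity of $(\mathbb{I}_d+P)^{d-1}$ kills any nonzero non-negative residual $Pp^\star-\lambda^\star p^\star$; the support argument correctly exhibits the reducible block form of Eq.~\eqref{eq:reducible}; and pairing a putative Jordan chain against the left Perron eigenvector rules out algebraic multiplicity greater than one (noting that $P^\top$ is irreducible because reversing arcs preserves strong connectivity). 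One phrasing to tighten in step (i): after applying $(\mathbb{I}_d+P)^{d-1}$ to the nonzero non-negative residual, the vector $Pq-\lambda^\star q$ is strictly positive in \emph{every} entry, not merely satisfying ``at least one strict inequality''; this full strictness, together with $q>0$, is exactly what is needed to extract an $\epsilon>0$ with $Pq\ge(\lambda^\star+\epsilon)q$, whereas a single strict inequality would not suffice. What your approach buys is an elementary, citation-free proof of the only external ingredient used in Theorem~\ref{theorem:PSD-2}; what the paper's choice buys is brevity, since the result is entirely standard.
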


With all the required tools now present in our arsenal, we begin to analyse the structure of the $\PSD_d^2$ and $\PCP_d^2$ cones. We start by showing that diagonal dominance is a sufficient condition to guarantee membership in these cones. It would be insightful to compare the following results with \cite[Theorem 2]{Barker1975diagonal} and \cite[Theorem 4.4]{johnston2019pairwise}.

\begin{lemma}\label{lemma:B-DD}
If $B\in \M{d}$ is a (hermitian) diagonally dominant matrix, then $B\in \PSD_d^2$. 
\end{lemma}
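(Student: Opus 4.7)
The plan is to write $B$ explicitly as a sum of rank-one positive matrices, each supported on at most two coordinates. For each ordered pair $i<j$, let me put $B_{ij}=|B_{ij}|e^{i\theta_{ij}}$ in polar form (any choice of $\theta_{ij}$ works when $B_{ij}=0$), and set
$$\ket{v_{ij}} \coloneqq \sqrt{|B_{ij}|}\bigl(e^{i\theta_{ij}}\ket{i}+\ket{j}\bigr).$$
Each $\ket{v_{ij}}$ has $\sigma(v_{ij})\leq 2$, and a direct expansion gives
$$\ketbra{v_{ij}}{v_{ij}} = |B_{ij}|\bigl(\ketbra{i}{i}+\ketbra{j}{j}\bigr) + B_{ij}\ketbra{i}{j} + \overline{B_{ij}}\ketbra{j}{i}.$$
Since $B$ is hermitian, $B_{ji}=\overline{B_{ij}}$, so summing these contributions over all $i<j$ reproduces exactly the off-diagonal part of $B$, while adding $\sum_{j\neq i}|B_{ij}|$ to each diagonal slot $i$.

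Next I would consider the residual
$$D \coloneqq B - \sum_{i<j}\ketbra{v_{ij}}{v_{ij}},$$
which is diagonal by the previous calculation, with entries $D_{ii}=B_{ii}-\sum_{j\neq i}|B_{ij}|$. Because $B$ is hermitian, each $B_{ii}$ is real; combined with the paper's observation that hermitian DD matrices are positive semi-definite (which in particular forces $B_{ii}\geq 0$), the diagonal dominance hypothesis then yields $D_{ii}\geq 0$ for every $i$. Hence $D=\sum_i D_{ii}\ketbra{i}{i}$ is a non-negative combination of rank-one projectors of support size $1$, and so trivially has factor width $1$.

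Putting the two pieces together,
$$B = \sum_i D_{ii}\ketbra{i}{i} \;+\; \sum_{i<j}\ketbra{v_{ij}}{v_{ij}}$$
is an explicit rank-one decomposition of $B$ whose summands have support size at most two, which establishes $B\in\PSD_d^2$ by Definition~\ref{def:psd-2}. There is no real obstacle here: the only thing requiring care is tuning the phases of the $\ket{v_{ij}}$ so that the cross-terms match $B_{ij}$ and $B_{ji}$ simultaneously (this is why the factor $e^{i\theta_{ij}}$ must appear on the $\ket i$ component but not on the $\ket j$ component), after which the diagonal dominance inequality is precisely what is required to make the leftover diagonal entrywise non-negative.
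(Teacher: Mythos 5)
Your decomposition is correct and is essentially the paper's own proof: the paper splits $B$ into the same $2\times 2$ principal blocks $\bigl(\begin{smallmatrix}|B_{ij}| & B_{ij}\\ B_{ji} & |B_{ji}|\end{smallmatrix}\bigr)$ plus the diagonal remainder $b_i=B_{ii}-\sum_{j\neq i}|B_{ij}|\geq 0$, and you have merely written each block explicitly as the rank-one outer product $\ketbra{v_{ij}}{v_{ij}}$ with the phase placed so the cross-terms match. No gap; the argument is the same.
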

\begin{proof}
Let us use the symbol in Eq.~\eqref{eq:2x2} to define a $d\times d$ matrix which has zeros everywhere except in the $i,j$-prinicipal submatrix, where the entries are defined by the matrix present in the notation.
\begin{equation}\label{eq:2x2}
    \left(
\begin{array}{ c c }
   a & b \\
   c & d
\end{array}
\right)_{i,j\in [d]} \coloneqq a\ketbra{i}{i} + b\ketbra{i}{j} + c\ketbra{j}{i} + d\ketbra{j}{j} \in \M{d}
\end{equation}
Then, the following decomposition of a hermitian DD matrix shows that it has factor width $2$
\begin{equation}
    B = \sum_{1\leq i<j \leq d} \left(
\begin{array}{ c c }
   |B_{ij}| & B_{ij} \\
   B_{ji} & |B_{ji}|
\end{array}
\right)_{i,j\in [d]} + \operatorname{diag}\ket{b}
\end{equation}
where $\ket{b}\in \mathbb{R}^d$ is defined entrywise as $b_i=  B_{ii}-\sum_{j\neq i}|B_{ij}|\geq 0$.
\end{proof}

\begin{lemma}\label{lemma:AB-DD}
Let $(A,B)\in \MLDUI{d}$ be such that it satisfies the necessary conditions for membership in $\PCP_d$. (see Lemma~\ref{lemma:pcp-necessary}). Moreover, if $B$ is diagonally dominant, then $(A,B)\in \PCP_d^2$.
\end{lemma}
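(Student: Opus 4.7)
The plan is to upgrade the decomposition used in Lemma~\ref{lemma:B-DD} to a bona fide PCP decomposition of the pair $(A,B)$ whose Hadamard products are $2$-sparse. Exactly as in that lemma, I would write
\[
B = \sum_{1\leq i<j\leq d} B^{(ij)} + \operatorname{diag}\ket{b}, \qquad b_i = B_{ii}-\sum_{j\neq i}|B_{ij}|\geq 0,
\]
where $B^{(ij)}$ is the $2\times 2$ block on $\{i,j\}$ with $|B_{ij}|$ on the diagonal and the original $B_{ij}, B_{ji}$ off-diagonal. Each such block will be reproduced by a single PCP term with support $\{i,j\}$; for a pair with $B_{ij}\neq 0$ the natural candidate is
\[
\ket{v} = \ket{i}+t\ket{j}, \qquad \ket{w} = \alpha_i\ket{i}+\alpha_j t^{-1}\ket{j},
\]
with $|\alpha_i|=|\alpha_j|=\sqrt{|B_{ij}|}$ and $\arg(\alpha_i\overline{\alpha_j})=\arg B_{ij}$. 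A direct computation shows that $\ketbra{v\odot w}{v\odot w}$ equals $B^{(ij)}$ on $\{i,j\}$ and that $\ketbra{v\odot\overline{v}}{w\odot\overline{w}}$ deposits $|B_{ij}|$ at each of $A_{ii},A_{jj}$ and $|B_{ij}|t^{-2},\ |B_{ij}|t^{2}$ at $A_{ij},A_{ji}$, respectively.

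The scaling parameter $t>0$ is where the hypothesis enters. I would pick $t$ so that neither off-diagonal overshoots its target, i.e.
\[
\frac{|B_{ij}|}{A_{ij}}\leq t^{2}\leq \frac{A_{ji}}{|B_{ij}|},
\]
and such a $t$ exists \emph{because} $A_{ij}A_{ji}\geq |B_{ij}|^2$ — this is precisely the condition making the interval non-empty. The residual off-diagonal deficits $A_{ij}-|B_{ij}|t^{-2}\geq 0$ and $A_{ji}-|B_{ij}|t^{2}\geq 0$ are then filled in by auxiliary PCP terms of the form $\ket{v}=\mu\ket{i},\ \ket{w}=\nu\ket{j}$ (and the analogue with $i,j$ swapped); since $v\odot w=0$ for these, they contribute nothing to $B$ and populate only the single prescribed off-diagonal entry of $A$. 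Pairs with $B_{ij}=0$ are handled by these auxiliary $1$-sparse terms alone.

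Finally, the diagonal bookkeeping works out automatically: because $(A,B)\in\MLDUI{d}$ forces $A_{ii}=B_{ii}$, and the pair terms above already supply $\sum_{j\neq i,\,B_{ij}\neq 0}|B_{ij}|$ to $A_{ii}$, the leftover $b_i=B_{ii}-\sum_{j\neq i}|B_{ij}|$ matches simultaneously the diagonal deficit of $A$ and of $B$, and is absorbed by the trivial $1$-sparse term $\ket{v}=\ket{w}=b_i^{1/4}\ket{i}$. All the vector pairs introduced satisfy $\sigma(v\odot w)\leq 2$, yielding $(A,B)\in\PCP_d^2$. The only subtle step is the simultaneous balancing of diagonal and off-diagonal demands on $A$ with a single free parameter $t$ per pair; this is not a real obstacle since the diagonal contribution $|B_{ij}|$ of the main term is $t$-independent, so $t$ can be tuned freely inside the interval granted by $A_{ij}A_{ji}\geq |B_{ij}|^2$.
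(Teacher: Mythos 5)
Your construction is correct, and it follows the same overall strategy as the paper: both proofs peel $(A,B)$ into a diagonal remainder plus, for each pair $\{i,j\}$, a block supported on $\{i,j\}$ whose $B$-part has $|B_{ij}|$ on the diagonal, with nonnegativity of the leftover diagonal guaranteed by diagonal dominance. The difference lies in how the $2\times 2$ blocks are disposed of. The paper simply observes that each block pair satisfies the necessary PCP conditions and is supported on a two-dimensional subspace, then cites the known fact that for such pairs these conditions are sufficient (\cite[Lemma B.11]{nechita2020graphical}, \cite[Theorem 4.1]{johnston2019pairwise}). You instead reprove that $2\times 2$ sufficiency from scratch: the vectors $\ket{v}=\ket{i}+t\ket{j}$, $\ket{w}=\alpha_i\ket{i}+\alpha_j t^{-1}\ket{j}$ reproduce the $B$-block exactly and deposit $|B_{ij}|t^{\mp 2}$ on $A_{ij},A_{ji}$, the window $|B_{ij}|/A_{ij}\leq t^2\leq A_{ji}/|B_{ij}|$ is nonempty precisely because $A_{ij}A_{ji}\geq|B_{ij}|^2$ (and when $B_{ij}\neq 0$ this forces $A_{ij},A_{ji}>0$, so the window is well defined), and the residual off-diagonal mass of $A$ is absorbed by terms with $v\odot w=0$, which exist because $A\in\EWP_d$. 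Your bookkeeping on the diagonal is also right, using $\operatorname{diag}A=\operatorname{diag}B$. What your route buys is a self-contained, fully explicit PCP decomposition with all $\sigma(v\odot w)\leq 2$; what it costs is redoing a computation the paper outsources to a reference. Both are valid proofs of the lemma.
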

\begin{proof}
    Using the same notation as in Lemma~\ref{lemma:B-DD}, we can decompose the given pair $(A,B)$ as
\begin{equation}
    (A,B) = \sum_{1\leq i<j \leq d} \left( \left(
\begin{array}{ c c }
   |B_{ij}| & A_{ij} \\
   A_{ji} & |B_{ji}|
\end{array}
\right)_{i,j\in [d]} , \left(
\begin{array}{ c c }
   |B_{ij}| & B_{ij} \\
   B_{ji} & |B_{ji}|
\end{array}
\right)_{i,j\in [d]}     \right) + (\operatorname{diag}\ket{a},\operatorname{diag}\ket{b} )
\end{equation}
where $\ket{a}=\ket{b}\in \mathbb{R}^d$ is defined entrywise as $a_i=A_{ii}-\sum_{j\neq i}|B_{ij}|\geq 0$. Each pair in the above sum lies in $\PCP_d^2$, since it satisfies the lemma's hypothesis and is supported on a $2$-dimensional subspace, see \cite[Lemma B.11]{nechita2021graphical} or \cite[Theorem 4.1]{johnston2019pairwise}. Hence, $(A,B)\in \PCP_d^2$.
\end{proof}

With Lemmas~\ref{lemma:B-DD} and \ref{lemma:AB-DD} in place, we now proceed to obtain a complete characterization of the factor width $2$ cones $\PSD_d^2$ and $\PCP_d^2$. The following results can be interpreted as generalizations of a similar result for real symmetric matrices \cite[Theorem 9]{Boman2005factor}, where the set of factor width $2$ matrices has been shown to be equal to the set of scaled diagonally dominant matrices. A similar result has been obtained in \cite[Theorem 1]{ringbauer2018certification}, in the context of multilevel coherence of mixed quantum states. 

\begin{theorem} \label{theorem:PSD-2}
For a (hermitian) matrix $B\in \M{d}$, the following equivalences hold:
\begin{equation*}
    B\in \PSD_d^2 \!\iff\! M(B) \text{ is an \emph{M}-matrix} \!\iff\! M(B) \in \PSD_d \!\iff\!  B \text{ is scaled diagonally dominant}.
\end{equation*}
\end{theorem}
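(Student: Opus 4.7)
The plan is to establish the equivalences via the cyclic chain
\[
B \in \PSD_d^2 \ \Longrightarrow\ B \text{ is SDD} \ \Longrightarrow\ M(B) \text{ is an M-matrix} \ \Longrightarrow\ B \in \PSD_d^2.
\]
Of the three implications, the first is the substantive one; the second is a short Collatz--Wielandt calculation, and the third uses Lemma~\ref{lemma:B-DD} to pass from SDD back to factor width $2$.

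For $B \in \PSD_d^2 \Rightarrow B$ is SDD, the strategy is to majorize $M(B)$ by a real \emph{Stieltjes} matrix extracted from a factor-width-$2$ decomposition. Write $B = \sum_n \ketbra{v_n}{v_n}$ with $\sigma(v_n) \leq 2$, and for each $n$ with $\operatorname{supp}(v_n) = \{i,j\}$ set $\tilde v_n := |v_n^i|\ket{i} - |v_n^j|\ket{j}$ (with $\tilde v_n := |v_n|$ in the trivial case). Then $\tilde B := \sum_n \ketbra{\tilde v_n}{\tilde v_n}$ is real symmetric and PSD, has the same diagonal as $B$, and has non-positive off-diagonals $\tilde B_{ij} = -\sum_n |v_n^i||v_n^j| \leq -|B_{ij}| = M(B)_{ij}$ by the triangle inequality. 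In particular $\tilde B$ is an M-matrix, so Perron--Frobenius (Lemma~\ref{lemma:Perron}) applied to $s\,\mathbb{I}_d - \tilde B$ supplies $v > 0$ with $\tilde B v \geq 0$. Since $v > 0$ and $M(B) - \tilde B$ is entrywise non-negative, we conclude $M(B) v \geq \tilde B v \geq 0$. Setting $D := \operatorname{diag}(v)$, the componentwise inequality $(M(B)v)_i \geq 0$ is precisely the DD condition $B_{ii} v_i^2 \geq v_i \sum_{j \neq i} |B_{ij}| v_j$ for $DBD$, so $B$ is SDD.

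For SDD $\Rightarrow M(B)$ is an M-matrix, pick $D > 0$ witnessing DD of $DBD$ and set $v := \operatorname{diag}(D)$; the same calculation (read backwards) gives $M(B) v \geq 0$. Writing $M(B) = s\,\mathbb{I}_d - P$ with $s := \max_i B_{ii}$ and $P := s\,\mathbb{I}_d - M(B) \in \EWP_d$, the inequality $Pv \leq sv$ together with Collatz--Wielandt, $\rho(P) \leq \max_i (Pv)_i/v_i$, yields $\rho(P) \leq s$, i.e.\ $M(B)$ is an M-matrix. For the closing implication $M(B)$ M-matrix $\Rightarrow B \in \PSD_d^2$, Lemma~\ref{lemma:Perron} applied to $P$ produces $v > 0$ with $Pv \leq \rho(P) v \leq sv$, i.e.\ $M(B) v \geq 0$; with $D := \operatorname{diag}(v)$ this says $DBD$ is DD, so Lemma~\ref{lemma:B-DD} gives $DBD \in \PSD_d^2$, and since conjugation by the diagonal $D^{-1}$ preserves the support of each rank-one factor, $B = D^{-1}(DBD) D^{-1}$ also lies in $\PSD_d^2$.

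The chief technical obstacle I anticipate is the possible reducibility of the non-negative matrices involved: Lemma~\ref{lemma:Perron} guarantees a strictly positive Perron eigenvector only when the matrix is irreducible, whereas both $P$ and the non-negative matrix obtained from $\tilde B$ may fail to be irreducible. I would circumvent this either by a perturbation argument--replace $P$ with $P + \varepsilon\,\mathbb{J}_d$, which is strictly positive and hence irreducible, extract its Perron vector, and pass to the limit using closedness of $\PSD_d^2$ and of the M-matrix cone--or more concretely by permuting $P$ to Frobenius normal form and assembling a strictly positive $v$ from the Perron vectors of each irreducible diagonal block with appropriate positive scalings.
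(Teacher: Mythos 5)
Your proof is correct and rests on the same pillars as the paper's --- a cyclic chain of implications, a Perron eigenvector to produce the diagonal scaling, and Lemma~\ref{lemma:B-DD} to convert diagonal dominance back into factor width $2$ --- but you run the cycle in the opposite direction and treat the leg leaving $\PSD_d^2$ differently. The paper groups the factors $\ket{v_n}$ by their supports and observes that $M(B)$ is \emph{exactly} $\sum_{n\in I_1}\ketbra{v_n}{v_n}+\sum_{i<j}M(B^{ij})$, a sum of positive semi-definite matrices, so $M(B)$ is an M-matrix in one step; you instead manufacture the real minorant $\tilde B$, apply Perron--Frobenius to it, and transfer the inequality $\tilde B v\geq 0$ to $M(B)v\geq 0$ via the entrywise comparison $M(B)-\tilde B\in\EWP_d$. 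Both arguments ultimately use the same two facts (the triangle inequality $|B_{ij}|\leq\sum_n|v_n^i||v_n^j|$ and the positive semi-definiteness of $2\times 2$ comparison matrices), and your ordering costs an extra implication (SDD $\Rightarrow$ M-matrix, via Collatz--Wielandt) that the paper's ordering renders unnecessary --- a fair price for a self-contained reverse cycle. Two small points: take $s=\max_i|B_{ii}|$ rather than $\max_i B_{ii}$ so that $P=s\,\mathbb{I}_d-M(B)$ is guaranteed to be entrywise non-negative; and of your two fixes for reducibility, prefer the Frobenius normal form --- every non-negative matrix you feed to Lemma~\ref{lemma:Perron} here is symmetric, so reducibility genuinely means a block-diagonal (not merely block-triangular) structure after permutation and the positive vector assembles blockwise with no limiting argument, whereas the $\varepsilon\,\mathbb{J}_d$ perturbation as sketched only yields $B+O(\varepsilon)\,\mathbb{I}_d\in\PSD_d^2$ and then additionally requires the (true but here unproved) closedness of $\PSD_d^2$.
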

\begin{proof}
    Since $B$ is hermitian, $M(B)$ is symmetric, which implies that $M(B)$ is an M-matrix $\iff M(B)\in \PSD_d$. With this fact in the background, assume first that $B\in \PSD_d^2$ can be decomposed as $B=\sum_{n\in I}\ketbra{v_n}{v_n}$, where $\sigma(v_n)\leq 2$ for each $n$. Now, split the index set as $I=I_1\sqcup \bigsqcup_{i<j}I_{ij}$, where $ I_1=\{n\in I:\sigma(v_n)= 1 \}\text{ and }I_{ij}=\{n\in I : \operatorname{supp}\ket{v_n}=\{i,j\} \}$.
    By defining $B^{ij}=\sum_{n\in I_{ij}}\ketbra{v_n}{v_n}$ (so that $M(B^{ij})\in \PSD_d$ since $B$ is supported on a 2-dimensional subspace), we can deduce that $$M(B)=\sum_{n\in I_1}\ketbra{v_n}{v_n} + \sum_{i<j}M(B^{ij})\in \PSD_d. $$
    
    Now, let $B$ be a hermitian matrix such that $M(B)=s\,\mathbb{I}_d - P\in \PSD_d$ with $s\geq \rho(P)$. If $M(B)$ is reducible, it can be written as a direct sum of irreducible matrices. Hence, without loss of generality, we can assume that $M(B)$ is irreducible. Lemma~\ref{lemma:Perron} then provides us with a positive Perron eignenvector $\ket{p}\in \mathbb{R}^d_+$ such that $P\ket{p}=\rho(P)\ket{p}$. Define $D=\operatorname{diag}\ket{p}$ and let $\ket{e}\in \mathbb{R}^d$ be the all ones vectors, so that $DM(B)D\ket{e}=D(s-\rho(P))\ket{p}$ is entrywise non-negative, i.e $DM(B)D$ is diagonally dominant. The same $D$ then works to show that $B$ is scaled diagonally dominant.
    
    For the final implication, assume that there exists a positive diagonal matrix $D$ such that $DBD$ is diagonally dominant. Using Lemma~\ref{lemma:B-DD}, we then have $DBD\in \PSD_d^2 \implies B\in \PSD_d^2$.
\end{proof}
\begin{theorem} \label{theorem:PCP-2}
For $(A,B)\in \MLDUI{d}$ satisfying the necessary conditions for membership in $\PCP_d$ (see Lemma~\ref{lemma:pcp-necessary}), the following equivalence holds: $(A,B)\in \PCP_d^2 \iff B\in \PSD_d^2$.
\end{theorem}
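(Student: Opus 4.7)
The plan is to prove the two implications separately. The forward direction is immediate from definitions, while the backward direction reduces, via a rescaling argument, to the diagonally dominant case already handled by Lemma \ref{lemma:AB-DD}.

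For the forward direction ($\Rightarrow$), suppose $(A,B) \in \PCP_d^2$ admits a PCP decomposition $\{\ket{v_n}, \ket{w_n}\}_{n \in I}$ with $\sigma(v_n \odot w_n) \leq 2$ for each $n$. Then $B = \sum_{n \in I} \ketbra{v_n \odot w_n}{v_n \odot w_n}$ is by construction a factor-width-$2$ rank-one decomposition of $B$, so $B \in \PSD_d^2$.

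For the converse ($\Leftarrow$), Theorem \ref{theorem:PSD-2} tells us that $B \in \PSD_d^2$ is scaled diagonally dominant, so there exists a positive diagonal matrix $D$ such that $D^2 B D^2$ is diagonally dominant (the exponent $2$ is chosen for later convenience). I would apply Lemma \ref{lemma:AB-DD} to the rescaled pair $(A', B') := (D^2 A D^2, D^2 B D^2)$. This requires verifying that (i) $(A', B') \in \MLDUI{d}$ satisfies the necessary $\PCP_d$ conditions from Lemma \ref{lemma:pcp-necessary}, and (ii) $B'$ is diagonally dominant. Condition (ii) holds by construction. For (i), conjugation by a positive diagonal matrix preserves entrywise non-negativity, positive semi-definiteness, and equality of diagonals, while simultaneously multiplying both $A_{ij} A_{ji}$ and $|B_{ij}|^2$ by the identical factor $D_{ii}^4 D_{jj}^4$, so the required inequalities transfer. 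Lemma \ref{lemma:AB-DD} then yields $(A', B') \in \PCP_d^2$.

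To transfer this conclusion back to $(A,B)$, I would start from a PCP decomposition $\{\ket{v'_n}, \ket{w'_n}\}$ of $(A', B')$ with $\sigma(v'_n \odot w'_n) \leq 2$ and set $\ket{v_n} := D^{-1}\ket{v'_n}$, $\ket{w_n} := D^{-1}\ket{w'_n}$. Entrywise one has $v_n \odot \overbar{v_n} = D^{-2}(v'_n \odot \overbar{v'_n})$ and $v_n \odot w_n = D^{-2}(v'_n \odot w'_n)$, with analogous identities for the $w$-terms; assembling these gives $A = D^{-2} A' D^{-2}$ and $B = D^{-2} B' D^{-2}$ as PCP decompositions of $(A, B)$ through the vectors $\{\ket{v_n}, \ket{w_n}\}$. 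Since $D$ is invertible, $\sigma(v_n \odot w_n) = \sigma(v'_n \odot w'_n) \leq 2$, hence $(A, B) \in \PCP_d^2$. The main obstacle, and the only real content of the argument, is recognizing the correct symmetric rescaling $v_n \to D v_n, w_n \to D w_n$: it induces $(A, B) \mapsto (D^2 A D^2, D^2 B D^2)$ and converts the SDD hypothesis on $B$ into a plain DD condition on $B'$, which is precisely what Lemma \ref{lemma:AB-DD} needs as input.
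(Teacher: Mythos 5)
Your proposal is correct and follows essentially the same route as the paper's proof: the forward direction is read off the definition, and the converse uses Theorem \ref{theorem:PSD-2} to produce a positive diagonal scaling making $B$ diagonally dominant, applies Lemma \ref{lemma:AB-DD} to the scaled pair, and undoes the scaling on the decomposition vectors. Your extra details (checking that the necessary PCP conditions are invariant under diagonal conjugation and that supports are preserved when rescaling the vectors) are exactly the steps the paper leaves implicit.
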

\begin{proof}
    The forward implication is trivial to prove. For the reverse implication, let us assume that $B\in \PSD_d^2$. Theorem~\ref{theorem:PSD-2} then guarantees the existence of a positive diagonal matrix $D$ such that $DBD$ is (hermitian) diagonally dominant. One can then apply Lemma~\ref{lemma:AB-DD} on the pair $(DAD,DBD)$ to deduce that $(DAD,DBD)\in \PCP_d^2$, which clearly implies that $(A,B)\in \PCP_d^2$.
\end{proof}

Upon light scrutiny, it becomes clear that an analogue of Theorem~\ref{theorem:PCP-2} would not work for the $\TCP_d^2$ cone, due to the added complexity of the third matrix. In fact, counterexamples exist in the form of matrix triples $(A,B,C)\in \MLDOI{d}$ where, even though $(A,B)$ and $(A,C)$ separately satisfy the constraints of Lemma~\ref{lemma:AB-DD}, $(A,B,C)$ is not TCP, see \cite[Example 9.2]{Singh2020diagonal}.

Even in the real regime of symmetric positive semi-definite matrices, the cones of factor width $k\geq 3$ matrices have evaded tractability. Computing the factor width of a given real symmetric matrix may be NP-hard \cite{Boman2005factor}. Hence, it is reasonable to expect similar hardness in obtaining characterizations of the factor width $k$ cones $\PSD_d^k, \PCP_d^k$ and $\TCP_d^k$ for $k\geq 3$. Although no attempt to analyse these cones in full generality is made in this paper, we do provide simple necessary conditions for membership in these cones using the maps introduced in Eq.~\eqref{eq:M-k}.

\begin{proposition}
Let $B\in \PSD_d^k$. Then, $M_{k-1}(B)$ is positive semi-definite.
\end{proposition}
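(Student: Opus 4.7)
The plan is to establish $w^* M_{k-1}(B)\, w \geq 0$ for every test vector $w\in \C{d}$, after rewriting the comparison-type matrix in a more structural form. Since $B\in \PSD_d^k\subset \PSD_d$ has non-negative diagonal, the definition in Eq.~\eqref{eq:M-k} gives the compact identity
\[
M_{k-1}(B) \;=\; k\, D_B - |B|,
\]
where $D_B := \operatorname{diag}(B_{11},\dots,B_{dd})$ and $|B|$ denotes the entrywise absolute value of $B$ (which is real symmetric with non-negative entries). So the goal reduces to proving $w^*|B|\,w \leq k\, w^* D_B\, w$ for all $w$.

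Fix a factor-width-$k$ decomposition $B = \sum_{n\in I} \ketbra{v_n}{v_n}$ with $\sigma(v_n)\leq k$. The symmetry and non-negativity of $|B|$ give $w^*|B|\, w \leq \sum_{i,j} |w_i|\, |B_{ij}|\, |w_j|$, while the triangle inequality applied to the rank-one expansion yields $|B_{ij}|\leq \sum_n |(v_n)_i|\, |(v_n)_j|$. Swapping sums,
\[
w^* |B|\, w \;\leq\; \sum_{n\in I} \Bigl(\, \sum_{i\in \operatorname{supp}(v_n)} |w_i|\, |(v_n)_i| \Bigr)^{\!2}.
\]
Applying the Cauchy--Schwarz inequality on the index set $\operatorname{supp}(v_n)$ of cardinality at most $k$, the inner bracketed quantity is bounded by $|\operatorname{supp}(v_n)|\cdot \sum_i |w_i|^2 |(v_n)_i|^2 \leq k \sum_i |w_i|^2 |(v_n)_i|^2$. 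Summing over $n$ and using $\sum_n |(v_n)_i|^2 = B_{ii}$, one arrives at $w^* |B|\, w \leq k\, w^* D_B\, w$, which rearranges to the desired $w^* M_{k-1}(B)\, w \geq 0$.

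There is no substantial obstacle here; the key observation is that the Cauchy--Schwarz constant $|\operatorname{supp}(v_n)|\leq k$ is matched exactly to the prefactor $k$ sitting in the identity $M_{k-1}(B) = kD_B - |B|$, so the factor-width hypothesis is consumed precisely once. Note that one cannot naively argue rank-one-by-rank-one and sum, because $M_{k-1}$ is non-linear in $B$ (the off-diagonal entries carry absolute values); this is why the proof routes through a test-vector estimate rather than a direct decomposition argument of the kind employed in the $k=2$ case of Theorem~\ref{theorem:PSD-2}. As a sanity check, for $k=2$ the statement recovers $M(B)\in \PSD_d$ for any $B\in\PSD_d^2$, in agreement with Theorem~\ref{theorem:PSD-2}, while for $k=1$ it trivially reduces to the obvious identity $M_0(B)=0$ on diagonal PSD matrices.
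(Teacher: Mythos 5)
Your proof is correct. The identity $M_{k-1}(B)=kD_B-|B|$ is valid because $B\in\PSD_d$ forces $B_{ii}=|B_{ii}|\geq 0$, the chain of estimates (triangle inequality on the rank-one expansion, then Cauchy--Schwarz over the support of each $\ket{v_n}$, whose cardinality is exactly matched by the prefactor $k$) is sound, and the final resummation $\sum_n |(v_n)_i|^2=B_{ii}$ closes the argument for every complex test vector $w$.

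The core inequality is the same as the paper's --- the paper also bounds $|B_{ij}|\leq\sum_n|b_{n,i}||b_{n,j}|$ and then uses $|b_{n,i}|^2+|b_{n,j}|^2\geq 2|b_{n,i}b_{n,j}|$ summed over pairs in a size-$k$ support, which is your Cauchy--Schwarz step specialized to unit weights --- but you close the argument differently, and more economically. The paper only verifies $\langle e|M_{k-1}(B)|e\rangle\geq 0$ for the all-ones vector, extends this to arbitrary entrywise-positive $\psi$ via the scaling $\langle\psi|M_{k-1}(B)|\psi\rangle=\langle e|D_\psi M_{k-1}(B)D_\psi|e\rangle$ (implicitly using that $D_\psi BD_\psi$ is again in $\PSD_d^k$), and must then pass through a reducibility reduction and the Perron--Frobenius machinery of Lemma~\ref{lemma:Perron} to upgrade positivity on positive vectors to genuine positive semi-definiteness of the symmetric matrix $\alpha\mathbb{I}_d-P$. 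By carrying general weights $|w_i|$ through the estimate from the start, you prove $w^*M_{k-1}(B)w\geq 0$ for all $w\in\C{d}$ directly and dispense with the M-matrix/Perron detour entirely; the trade-off is negligible, and your route arguably makes clearer where the hypothesis $\sigma(v_n)\leq k$ is consumed. Your closing remark is also apt: since $B\mapsto M_{k-1}(B)$ is not linear, one cannot simply decompose and sum, which is why both proofs must work at the level of scalar estimates rather than mimicking the decomposition argument of Lemma~\ref{lemma:B-DD}.
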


\begin{proof}
    Assume that the vectors $\{\ket{b_n}\}_{n\in I}$ form the rank one decomposition of $B\in \PSD_d^k$ as in Definition~\ref{def:psd-2}. Let us, for convenience, define $\operatorname{supp}_n$ as an index set of size $k$ which contains the actual (maybe smaller) support of the vector $\ket{b_n}\in \C{d}:\,\, \operatorname{supp}\ket{b_n}\subseteq \operatorname{supp}_n$ for each $n\in I$. Let $\ket{e}\in \mathbb{R}^d$ be the all ones vector. Then,
    \begin{align*}
        \langle e |M_{k-1}(B)| e\rangle &= \sum_{i=1}^d (k-1)B_{ii} - \sum_{1\leq i\neq j \leq d} |B_{ij}| \\ 
        &= \sum_{i=1}^d (k-1) \sum_{n\in I} |b_{n,i}|^2 - \sum_{i\neq j} \left| \sum_{n\in I} b_{n,i}\overbar{b_{n,j}}    \right|  \\
        &\geq \sum_{n\in I} (k-1)\sum_{i=1}^d |b_{n,i}|^2 - \sum_{n\in I}\sum_{i\neq j} |b_{n,i}b_{n,j}| \\
        &\geq \sum_{n\in I} \left(\sum_{ \substack{i,j\,\in\, \operatorname{supp}_n \\ i<j }} |b_{n,i}|^2 + |b_{n,j}|^2 \right) - \sum_{n\in I} \left(\sum_{ \substack{i,j\,\in\, \operatorname{supp}_n \\ i<j }} 2|b_{n,i}b_{n,j}|\right) \\
        &= \sum_{n\in I} \left( \sum_{ \substack{i,j\,\in\, \operatorname{supp}_n \\ i<j }} |b_{n,i}|^2 + |b_{n,j}|^2 - 2|b_{n,i} b_{n,j}|    \right) \geq 0
    \end{align*}
    Now, for any positive vector $\ket{\psi}\in \mathbb{R}^d_{+}$, we have $\langle \psi | M_{k-1}(B) | \psi \rangle = \langle e | D_{\psi} M_{k-1}(B) D_{\psi} | e \rangle  \geq 0$, where $D_{\psi}=\operatorname{diag}\ket{\psi}$. Moreover, if we assume (without loss of generality) that $M_{k-1}(B)$ is irreducible and write $M_{k-1}(B) = \alpha\mathbb{I}_d - P$ for $\alpha\geq 0$ and an irreducible $P\in \EWP_d$,  we can choose $\ket{\psi}$ to be the (normalized) Perron eigenvector of $P$ so that $ \langle \psi | M_{k-1}(B) | \psi \rangle = \alpha - \langle \psi | P | \psi \rangle = \alpha - \rho(P) \geq 0$, where $\rho(P)$ is the spectral radius of $P$. This shows that $M_{k-1}(B)$ is positive semi-definite.
\end{proof}

\begin{proposition}
Let $(A,B,C)\in \TCP_d^k$. Then $M_{k-1}(B)$ and $M_{k-1}(C)$ are positive semi-definite.
\end{proposition}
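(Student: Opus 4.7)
The plan is to reduce this to the preceding proposition by showing that, whenever $(A,B,C)\in \TCP_d^k$, both $B$ and $C$ already lie in $\PSD_d^k$. The assertion for $B$ is essentially built into the definitions, while the one for $C$ needs only the minor additional remark that complex conjugation does not change the support of a vector, so the factor-width bound transfers from $v_n\odot w_n$ to $v_n\odot \overbar{w_n}$.

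Concretely, I would start from a TCP decomposition $\{\ket{v_n},\ket{w_n}\}_{n\in I}$ of $(A,B,C)$ with $\sigma(v_n\odot w_n)\leq k$ for every $n$, as guaranteed by the definition of $\TCP_d^k$. The identity $B = \sum_{n\in I}\ketbra{v_n\odot w_n}{v_n\odot w_n}$ displays $B$ as a sum of rank-one positive semi-definite matrices built from vectors of support size at most $k$, so $B\in \PSD_d^k$. For $C$, note that $\operatorname{supp}\ket{\overbar{w_n}} = \operatorname{supp}\ket{w_n}$, whence
$$\operatorname{supp}(v_n\odot \overbar{w_n}) \,=\, \operatorname{supp}\ket{v_n}\cap \operatorname{supp}\ket{w_n} \,=\, \operatorname{supp}(v_n\odot w_n),$$
which gives $\sigma(v_n\odot \overbar{w_n}) \leq k$ for every $n$. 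The decomposition $C = \sum_{n\in I}\ketbra{v_n\odot \overbar{w_n}}{v_n\odot \overbar{w_n}}$ then witnesses $C\in \PSD_d^k$.

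Applying the preceding proposition to each of $B$ and $C$ separately yields the positive semi-definiteness of $M_{k-1}(B)$ and $M_{k-1}(C)$, completing the proof. I do not foresee any real obstacle: the statement is essentially a corollary of the preceding proposition, and the only point worth flagging is the support-preservation under entrywise conjugation that allows the factor-width constraint satisfied by $B$ to carry over verbatim to $C$.
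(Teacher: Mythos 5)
Your proof is correct and matches the paper's intended argument: the paper states this proposition without proof precisely because it follows from the observation (made earlier in Section 3) that $(A,B,C)\in \TCP_d^k$ implies $B,C\in \PSD_d^k$, combined with the preceding proposition. Your explicit check that entrywise conjugation preserves supports, so that $\sigma(v_n\odot \overbar{w_n})=\sigma(v_n\odot w_n)\leq k$, is exactly the small point needed to make the reduction for $C$ airtight.
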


It is perhaps worthwhile to point out that the above necessary condition for membership in $\TCP_d^2$ has been utilized in \cite{Singh2020entanglement} to unearth a new kind of entanglement in bipartite quantum states which likes to hide behind peculiar distributions of zeros on the states' diagonals.

\section{The \texorpdfstring{PPT$^2$}{PPT squared} conjecture for diagonal unitary invariant maps} \label{sec:PPTsq}
In this section, we investigate the validity of the PPT$^2$ conjecture for diagonal unitary covariant maps in $\T{d}$. In particular, by exploiting the composition rule for these maps, along with the properties of factor width 2 pairwise completely positive matrices, we show that the PPT squared conjecture holds for linear maps in $\DUC_d$ and $\CDUC_d$. 

To begin with, let us consider two particular composition rules. 

\begin{definition}\label{def:DUC-CDUC-composition}
    On $\MLDUI{d}$, define bilinear compositions $\circ_1$ and $\circ_2$ as follows:
\begin{alignat}{2}
\circ_1 : \qquad \MLDUI{d} \,\,\, &\times \,\,\, \MLDUI{d} &&\rightarrow \MLDUI{d} \nonumber \\
\{ (A_1,B_1) &, (A_2,B_2) \} &&\mapsto (\mathfrak{A},\mathfrak{B}) = (A_1 A_2, B_1 \odot B_2^\top + \operatorname{diag}(A_1 A_2 - B_1\odot B_2)) \nonumber
\end{alignat}
\begin{alignat}{2}
\circ_2 : \qquad \MLDUI{d} \,\,\, &\times \,\,\, \MLDUI{d} &&\rightarrow \MLDUI{d} \nonumber \\
\{ (A_1,B_1) &, (A_2,B_2) \} &&\mapsto (\mathfrak{A},\mathfrak{B}) = (A_1 A_2, B_1\odot B_2 + \operatorname{diag}(A_1 A_2 - B_1\odot B_2)) \nonumber
\end{alignat}
\end{definition}

\begin{remark} \label{remark:DUC-CDUC-composition}
It is obvious from the above Definition that 
$$(A_1, B_1)\circ_1 (A_2, B_2) = (A_1, B_1)\circ_2 (A_2, B_2^\top) \qquad \forall (A_1, B_1), (A_2, B_2) \in \MLDUI{d}. $$
\end{remark}

Next, we state and prove an important proposition, which connects the above composition rules on matrix pairs to the operations of map composition in $\DUC_d$ and $\CDUC_d$. But first, we need familiarity with the notion of \emph{stability} under composition.
\begin{definition}
    A set $K\subseteq \T{d}$ is said to be stable under composition if $\Phi_1\circ \Phi_2 \in K$ for all $\Phi_{1},\Phi_2 \in K$.
\end{definition}

\begin{proposition} \label{prop:DUC-CDUC-composition}
$\CDUC_d \subset \T{d}$ is 
stable under composition, but $\DUC_d \subset \T{d}$ is not. Moreover, for pairs $(A_1,B_1), (A_2,B_2)\in \MLDUI{d}$, the following composition rules hold: 
$$\Phi^{(i)}_{(A_1,B_1)} \circ \Phi^{(j)}_{(A_2,B_2)} = \begin{cases}
    \, \Phi^{(2)}_{(\mathfrak{A},\mathfrak{B})},  &\text{where } (\mathfrak{A},\mathfrak{B}) = (A_1,B_1)\circ_i (A_2,B_2) \quad \text{if } i=j \\
    \, \Phi^{(1)}_{(\mathfrak{A},\mathfrak{B})},  &\text{where } (\mathfrak{A},\mathfrak{B}) = (A_1,B_1)\circ_i (A_2,B_2) \quad \text{if } i\neq j. 
    \end{cases} $$
\end{proposition}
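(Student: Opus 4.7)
The plan is to evaluate $\Phi^{(i)}_{(A_1,B_1)} \circ \Phi^{(j)}_{(A_2,B_2)}(X)$ directly using the action formulas \eqref{eq:DUC-action}--\eqref{eq:CDUC-action} and then read off the resulting matrix pair. Three simple observations will drive every computation: (i) $\operatorname{diag}\Phi^{(i)}_{(A,B)}(X) = A\ket{\operatorname{diag}X}$ for $i=1,2$, because the Hadamard piece $\widetilde{B}\odot(\cdot)$ has vanishing diagonal; (ii) consequently, when the outer map acts, its Hadamard factor $\widetilde{B_1}$ annihilates the diagonal contribution from the inner map; and (iii) $\widetilde{B_1}\odot \widetilde{B_2} = \widetilde{B_1\odot B_2}$, since both sides agree off the diagonal and are zero on it. In addition, we will use $(X^\top)^\top = X$ and $\widetilde{B}^\top = \widetilde{B^\top}$.

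Concretely, for $i=j=2$, setting $Y = \Phi^{(2)}_{(A_2,B_2)}(X) = \operatorname{diag}(A_2\ket{\operatorname{diag}X}) + \widetilde{B_2}\odot X$ and noting $\operatorname{diag}Y = A_2\ket{\operatorname{diag}X}$, a direct computation yields
\[
\Phi^{(2)}_{(A_1,B_1)}(Y) \;=\; \operatorname{diag}\bigl(A_1 A_2 \ket{\operatorname{diag}X}\bigr) + \widetilde{B_1\odot B_2}\odot X,
\]
which matches $\Phi^{(2)}_{(\mathfrak{A},\mathfrak{B})}(X)$ for $(\mathfrak{A},\mathfrak{B}) = (A_1,B_1)\circ_2(A_2,B_2)$ (the diagonal of $\mathfrak{B}$ is forced to be $\operatorname{diag}(A_1A_2)$ so that $(\mathfrak{A},\mathfrak{B}) \in \MLDUI{d}$). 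The $i=j=1$ case is identical except that two consecutive transposes appear on the Hadamard piece: one from each $\Phi^{(1)}$. One transpose is absorbed by the inner $X$, while the other survives on $B_2$, producing $\widetilde{B_1\odot B_2^\top}\odot X$; comparing with \eqref{eq:CDUC-action}, this is a $\Phi^{(2)}$ map and the combination matches $\circ_1$ by Remark~\ref{remark:DUC-CDUC-composition}. The two mixed cases $i\neq j$ carry exactly one transpose on $X$, hence output a $\Phi^{(1)}$ map; the surviving Hadamard factor is $B_1\odot B_2^\top$ when $i=1,j=2$ (matching $\circ_1$) and $B_1\odot B_2$ when $i=2,j=1$ (matching $\circ_2$).

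Stability of $\CDUC_d$ under composition follows immediately from the $i=j=2$ case, since the output is again of the form $\Phi^{(2)}$. For the failure of stability of $\DUC_d$, the $i=j=1$ case shows $\Phi^{(1)}\circ\Phi^{(1)}$ is a $\Phi^{(2)}$ map, and hence lies in $\CDUC_d$; this is reconfirmed abstractly by the covariance chain $\Phi_1\Phi_2(UXU^*) = \Phi_1(U^*\Phi_2(X)U) = U\Phi_1\Phi_2(X)U^*$. To show this genuinely lies outside $\DUC_d$ for generic inputs, note that a $\Phi^{(2)}_{(\mathfrak{A},\mathfrak{B})}$ map belongs to $\DUC_d$ iff $U^2\Phi(X) = \Phi(X)U^2$ for every $U\in\mathcal{DU}_d$, which forces $\Phi(X)$ to be diagonal for all $X$, i.e.\ $\widetilde{\mathfrak{B}}=0$; this fails as soon as $B_1\odot B_2^\top$ has any nonzero off-diagonal entry, which is easy to arrange (for instance with $A_1=A_2=\mathbb{I}_d$ and any non-diagonal hermitian $B_1=B_2$).

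No single step is deep; the main obstacle is purely bookkeeping, namely keeping track of the transposes and of the diagonal corrections built into the definitions of $\circ_1$ and $\circ_2$ across the four cases.
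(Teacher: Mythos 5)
Your proof is correct and follows essentially the same route as the paper: a direct computation with the action formulas, exploiting that the Hadamard part has zero diagonal and that $\widetilde{B_1}\odot\widetilde{B_2}=\widetilde{B_1\odot B_2}$, with the transposes tracked case by case. You in fact supply slightly more than the paper does, which only writes out the $i=j=2$ case and asserts the (non-)stability claims; your argument that a $\Phi^{(2)}$ map lies in $\DUC_d$ only when $\widetilde{\mathfrak B}=0$ correctly settles the non-stability of $\DUC_d$.
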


\begin{proof}
    The stability results follow directly from Definition~\ref{def:DUC-CDUC-DOC}. It is also trivial to check that if $\Phi_1\in \DUC_d$ and $\Phi_2\in \CDUC_d$ (or vice-versa), then $\Phi_1\circ \Phi_2 \in \DUC_d$, since $\forall \, U\in \mathcal{DU}_d$ and $Z\in \M{d}$, the following equation holds: $[\Phi_1\circ \Phi_2](UZU^*) = \Phi_1 [U \Phi_2(Z) U^*] = U^* [\Phi_1\circ \Phi_2 (Z)] U$.
    
    Let us now show one of the composition rules, say the one for two CDUC maps, leaving the other three to the reader. Consider thus two CDUC maps acting as follows: 
    $$\Phi^{(2)}_{(A_i,B_i)}(X) = \operatorname{diag}(A_i \ket{\operatorname{diag}X}) + \widetilde{B_i}\odot X.$$
    We have
    \begin{align*}
        \Phi^{(2)}_{(A_1, B_1)} \circ \Phi^{(2)}_{(A_2, B_2)}(X) &=  \Phi^{(2)}_{(A_1, B_1)} \Big(\underbrace{ \operatorname{diag}(A_2 \ket{\operatorname{diag}X}) + \widetilde{B_2}\odot X }_{Y}\Big) \\
        &= \operatorname{diag}(A_1 \ket{\operatorname{diag}Y}) + \widetilde{B_1}\odot Y\\
        &= \operatorname{diag}(\mathfrak A \ket{\operatorname{diag}X}) + \widetilde{\mathfrak B}\odot X,
    \end{align*}
    where $(\mathfrak A, \mathfrak B)$ are given precisely by the composition rule $\circ_2$ from Definition \ref{def:DUC-CDUC-composition}. Indeed, for the diagonal part, since $\tilde B_2$ has zero diagonal, we have $\operatorname{diag}(Y) = A_2 \ket{\operatorname{diag}X}$, hence 
    $$\operatorname{diag}(A_1 \ket{\operatorname{diag}Y}) = \operatorname{diag}(A_1 A_2\ket{\operatorname{diag}X}).$$
    proving the claim for $\mathfrak A$. Similarly, in $\widetilde B_1 \odot Y$, only the off-diagonal entries of $Y$ matter, and those are given by $\widetilde B_2 \odot X$. Hence, 
    $$\widetilde B_1 \odot Y = \widetilde B_1 \odot \left( \widetilde B_2 \odot X \right) = (\widetilde{B_1 \odot B_2})\odot X,$$
    proving that $\widetilde{\mathfrak B} = \widetilde{B_1 \odot B_2}$ and finishing the proof.
\end{proof}

We are now finally ready to prove the main result of our paper.

\begin{theorem}\label{thm:PPT2-C-DUC}
Consider matrix pairs $(A,B), (C,D) \in \MLDUI{d}$ such that the corresponding (C)DUC linear maps $\Phi^{(i)}_{(A,B)}, \Phi^{(i)}_{(C,D)} \in \mathcal{T}_d(\mathbb{C})$ are \emph{PPT} for $i=1,2$. Then, the compositions $\Phi^{(i)}_{(A,B)} \circ \Phi^{(j)}_{(C,D)}$ are entanglement breaking for $i,j=1,2$. 
\end{theorem}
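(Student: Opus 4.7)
The plan is to reduce the four possible compositions to a single question about pairs in $\MLDUI{d}$. By Proposition~\ref{prop:DUC-CDUC-composition}, each composition $\Phi^{(i)}_{(A,B)} \circ \Phi^{(j)}_{(C,D)}$ equals $\Phi^{(k)}_{(\mathfrak{A},\mathfrak{B})}$ for some $k\in\{1,2\}$, where $\mathfrak{A}=AC$ and $\mathfrak{B}$ agrees off-diagonally with $B\odot D$ or $B\odot D^\top$ and has diagonal equal to $(AC)_{ii}$. By Proposition~\ref{prop:PPT-ent-DUC/CDUC}, entanglement breaking of the composition is equivalent to $(\mathfrak{A},\mathfrak{B})\in\PCP_d$; I will actually establish the stronger statement $(\mathfrak{A},\mathfrak{B})\in\PCP_d^2$ by invoking Theorem~\ref{theorem:PCP-2}, which reduces the problem to (a) verifying the necessary PCP conditions of Lemma~\ref{lemma:pcp-necessary} for $(\mathfrak{A},\mathfrak{B})$, and (b) showing that $\mathfrak{B}\in\PSD_d^2$.

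For (a), $\mathfrak{A}=AC\in\EWP_d$ is immediate from entrywise non-negativity of $A$ and $C$. The matrix $\mathfrak{B}$ is PSD because Schur products of PSD matrices (here $B\odot D$ or $B\odot D^\top=B\odot\overline{D}$) are PSD, and the diagonal correction $(AC)_{ii}-B_{ii}D_{ii}$ is non-negative: the PPT conditions give $A_{ii}\geq B_{ii}$ and $C_{ii}\geq D_{ii}$, so $(AC)_{ii}\geq A_{ii}C_{ii}\geq B_{ii}D_{ii}$. For the entrywise inequality $\mathfrak{A}_{ij}\mathfrak{A}_{ji}\geq|\mathfrak{B}_{ij}|^2=|B_{ij}|^2|D_{ij}|^2$ (valid in both composition rules, since $|D_{ij}|=|D_{ji}|$ by hermiticity of $D$), I would retain only the non-negative terms $(AC)_{ij}\geq A_{ij}C_{jj}$ and $(AC)_{ji}\geq A_{ji}C_{ii}$ in the matrix products, and then combine the PPT bound $A_{ij}A_{ji}\geq|B_{ij}|^2$ with $C_{ii}C_{jj}\geq D_{ii}D_{jj}\geq|D_{ij}|^2$, the latter being the $2\times 2$ PSD condition on $D$.

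The main step is (b), equivalent by Theorem~\ref{theorem:PSD-2} to showing the comparison matrix $M(\mathfrak{B})$ is positive semi-definite. Since $M(\mathfrak{B})$ is a symmetric Z-matrix, Perron--Frobenius (Lemma~\ref{lemma:Perron}) ensures that PSD can be tested against a non-negative vector, so it suffices to verify
\begin{equation*}
\sum_i \psi_i^2\,(AC)_{ii} \;\geq\; \sum_{i\neq j} \psi_i\psi_j\,|B_{ij}|\,|D_{ij}|
\end{equation*}
for every $\psi\in\mathbb{R}^d_+$. Bounding $|B_{ij}||D_{ij}|\leq \sqrt{A_{ij}A_{ji}}\,\sqrt{C_{ij}C_{ji}}$ via the PPT inequalities and then applying Cauchy--Schwarz to the doubly-indexed pair $\bigl(\psi_i\sqrt{A_{ij}C_{ji}},\,\psi_j\sqrt{A_{ji}C_{ij}}\bigr)$ yields an upper bound of $\sqrt{\sum_i\psi_i^2(AC)_{ii}}\cdot\sqrt{\sum_j\psi_j^2(AC)_{jj}}=\sum_i\psi_i^2(AC)_{ii}$, as desired. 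I expect the main obstacle to be discovering the correct grouping for Cauchy--Schwarz: the naive attempt to prove $\mathfrak{B}$ is diagonally dominant (with $P=\mathbb{I}_d$) fails, and one must exploit the flexibility of the SDD framework (equivalently, Perron-eigenvector scaling) afforded by Theorem~\ref{theorem:PSD-2}. Once this step is in place, Theorem~\ref{theorem:PCP-2} delivers $(\mathfrak{A},\mathfrak{B})\in\PCP_d^2$, so the composition is entanglement breaking with factor width $2$.
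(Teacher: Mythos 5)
Your proposal is correct, and its skeleton coincides with the paper's: reduce all four compositions to the pair $(\mathfrak{A},\mathfrak{B})=(A,B)\circ_2(C,D)$ (with $(C,D^\top)$ handling the $\circ_1$ case, as you note via $|D_{ij}|=|D_{ji}|$), verify the necessary conditions of Lemma~\ref{lemma:pcp-necessary}, show $\mathfrak{B}\in\PSD_d^2$, and conclude via Theorem~\ref{theorem:PCP-2} and Proposition~\ref{prop:PPT-ent-DUC/CDUC}. Where you genuinely diverge is the key step $\mathfrak{B}\in\PSD_d^2$. The paper does \emph{not} pass through the comparison matrix: it exploits the identity $\mathfrak{B}_{ii}=(AC)_{ii}=\sum_k A_{ik}C_{ki}$ to split $\mathfrak{B}$ explicitly as $\operatorname{diag}(A\odot C)$ plus a sum over $i<j$ of matrices supported on $\{i,j\}$ with diagonal $(A_{ij}C_{ji},A_{ji}C_{ij})$ and off-diagonal $B_{ij}D_{ij}$; each block is PSD because the two PPT inequalities multiply to $A_{ij}C_{ji}\cdot A_{ji}C_{ij}\geq|B_{ij}D_{ij}|^2$, so factor width $2$ is exhibited directly, with no need for Theorem~\ref{theorem:PSD-2}, Perron--Frobenius, or Cauchy--Schwarz. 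Your route instead proves $M(\mathfrak{B})\succeq 0$ by testing against non-negative vectors (legitimate for a symmetric Z-matrix, e.g.\ by reducing to the Perron eigenvector as in the paper's Proposition on $M_{k-1}$, modulo the usual reduction to irreducible blocks which you should state) and a correctly grouped Cauchy--Schwarz, then invokes the full equivalence of Theorem~\ref{theorem:PSD-2}. Both are valid; the paper's decomposition is more elementary and, moreover, is what powers the remark after the theorem (Eq.~\eqref{eq:compose-2}) that the composed Choi matrix splits into PPT pieces on $2\otimes 2$ subsystems, whereas your argument is arguably the more systematic one (it is essentially the paper's own earlier, commented-out draft) and generalizes more readily to situations where an explicit block splitting of the diagonal is not available. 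Two trivial nits: $A_{ii}=B_{ii}$ and $C_{ii}=D_{ii}$ hold with equality by the definition of $\MLDUI{d}$, not merely as inequalities from PPT; and the Schur-product positivity of $B\odot D$ is not needed once you have the block decomposition or the $M(\mathfrak{B})\succeq0$ argument, since scaled diagonal dominance already implies $\mathfrak{B}\in\PSD_d$.
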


\begin{proof} 
First of all, invoke Proposition~\ref{prop:DUC-CDUC-composition} to deduce that 
$$ \Phi^{(i)}_{(A,B)} \circ \Phi^{(j)}_{(C,D)} = \begin{cases}
\, \Phi^{(2)}_{(\mathfrak{A},\mathfrak{B})},  &\text{where }            (\mathfrak{A},\mathfrak{B}) = (A,B)\circ_i (C,D) \quad \text{if } i=j \\
 \, \Phi^{(1)}_{(\mathfrak{A},\mathfrak{B})},  &\text{where } (\mathfrak{A},\mathfrak{B}) = (A,B)\circ_i (C,D) \quad \text{if } i\neq j
\end{cases}
    $$
Hence to prove the Theorem, it suffices to show that the matrix pairs $(\mathfrak{A}',\mathfrak{B}') = (A,B)\circ_1 (C,D)$ and $(\mathfrak{A},\mathfrak{B}) = (A,B)\circ_2 (C,D)$ are PCP, see Proposition~\ref{prop:PPT-ent-DUC/CDUC}. In what follows, we show that the latter pair $(\mathfrak{A},\mathfrak{B})$ is PCP, and the former case will follow from Remark~\ref{remark:DUC-CDUC-composition}. With this end in sight, let us analyze the structure of $\mathfrak{B}=\operatorname{diag}AC + \widetilde{B}\odot \widetilde{D}$ in some detail. Since all the involved maps in this Theorem are PPT, the compositions are guaranteed to be PPT as well, which implies that $\mathfrak{B}\in \PSD_d$. Let us initially restrict ourselves to the case when $d=2$. Here, we have 
    \begin{equation*}
        \mathfrak{B} = \left(  \begin{array}{cc}
        A_{11}C_{11} + A_{12}C_{21} & B_{12}D_{12}  \\
        B_{21}D_{21} &  A_{21}C_{12} + A_{22}C_{22} 
    \end{array}  \right) = \left(  \begin{array}{cc}
        A_{11}C_{11} & 0   \\
        0  & A_{22}C_{22} 
    \end{array}  \right) + \left(  \begin{array}{cc}
         A_{12}C_{21} &  B_{12}D_{12}  \\
         B_{21}D_{21} &  A_{21}C_{12}
    \end{array}  \right)
    \end{equation*}
Since $A,C\in \EWP_d$ and $A_{12}A_{21}C_{12}C_{21} \geq | B_{12}D_{12} |^2$ (because the maps $\Phi^{(i)}_{(A,B)}$ and $\Phi^{(i)}_{(C,D)}$ are PPT), it is clear that $\mathfrak{B}\in \PSD_2^2$. Similar splitting can be performed in the $d=3$ case:
\begin{align*}
\mathfrak{B} &= \left(  \begin{array}{ccc}
    A_{11}C_{11} + A_{12}C_{21} + A_{13}C_{31} & B_{12}D_{12} & B_{13}D_{13} \\
    B_{21}D_{21} &  A_{21}C_{12} + A_{22}C_{22} + A_{23}C_{32} & B_{23}D_{23} \\
    B_{31}D_{31}  &  B_{32}D_{32}  & A_{31}C_{13} + A_{32}C_{23} + A_{33}C_{33}
\end{array}  \right) \nonumber \\
    &= \operatorname{diag}(A\odot C) + \left(  \begin{array}{ccc}
        A_{12}C_{21} &  B_{12}D_{12}   & 0 \\
        B_{21}D_{21}  &  A_{21}C_{12} & 0 \\
        0 & 0 & 0
    \end{array}  \right) + \left(  \begin{array}{ccc}
        A_{13}C_{31} & 0 &  B_{13}D_{13}  \\
        0 &  0 & 0 \\
         B_{31}D_{31}  & 0 & A_{31}C_{13}
    \end{array}  \right) +  \left(  \begin{array}{ccc}
        0 & 0 & 0 \\
        0 & A_{23}C_{32} &  B_{23}D_{23}   \\
        0 &   B_{32}D_{32}  & A_{32}C_{23}
    \end{array}  \right) 
\end{align*}
which shows that $\mathfrak{B}\in \PSD_3^2$. It should now be apparent that the above splittings are nothing but special cases of a more general decomposition, which holds for $\mathfrak{B}\in \PSD_d$ for arbitrary $d\in \mathbb{N}$:
\begin{equation*}
    \mathfrak{B} = \operatorname{diag}(A\odot C) + \sum_{1\leq i<j \leq d} \left(\begin{array}{ c c }
        A_{ij}C_{ji} &  B_{ij}D_{ij} \\
        B_{ji}D_{ji}  &  A_{ji}C_{ij}
    \end{array} \right)_{i,j\in [d]}
\end{equation*}
Observe that we used the notation from Lemma~\ref{lemma:B-DD} above. The PPT constraints $A,C\in \EWP_d$ and $A_{ij}A_{ji}C_{ij}C_{ji}\geq |B_{ij}D_{ij}|^2 \,\, \forall i,j$ imply that all the matrices in the above decomposition are positive semi-definite, which shows that $\mathfrak{B}$ has factor width 2. A swift application of Theorem~\ref{theorem:PCP-2} then shows that $(\mathfrak{A},\mathfrak{B})\in \PCP_d^2$.
\end{proof}    
    
We should emphasize here that the above theorem contains a \emph{stronger} form of the PPT$^2$ conjecture for (C)DUC maps: we show that the composition of two PPT (C)DUC maps corresponds to a matrix pair $(\mathfrak{A},\mathfrak{B})\in \PCP_d^2$, where $\PCP_d^2$ is a strict subset of $\PCP_d$, for $d \geq 3$, see Remark \ref{rem:strict-inclusion}. 

Let the vectors $\{\ket{v_n},\ket{w_n} \}_{n\in I}\subset \C{d}$ form the PCP decomposition of $(\mathfrak{A},\mathfrak{B})$ as in Definition~\ref{def:pcp/tcp-2}. By defining $\mathfrak{A}_n = \ketbra{v_n\odot \overbar{v_n}}{w_n\odot \overbar{w_n}}$ and $\mathfrak{B}_n=\ketbra{v_n\odot w_n}{v_n\odot w_n}$, we see that for $i=1,2$, the choi matrix $J(\Phi^{(i)}_{(\mathfrak{A},\mathfrak{B})})$ splits up into a sum of PPT choi matrices $J(\Phi^{(i)}_{(\mathfrak{A}_n,\mathfrak{B}_n)})$, each having support on a $2\otimes 2$ subsystem (barring some diagonal entries). Hence, we can write
\begin{equation} \label{eq:compose-2}
 J(\Phi^{(i)}_{(\mathfrak{A},\mathfrak{B})}) = \sum_{n\in I}J(\Phi^{(i)}_{(\mathfrak{A}_n,\mathfrak{B}_n)}) \implies 
 \Phi^{(i)}_{(\mathfrak{A},\mathfrak{B})} = \sum_{n\in I} \Phi^{(i)}_{(\mathfrak{A}_n,\mathfrak{B}_n)},
\end{equation}
where, since each $\Phi^{(i)}_{(\mathfrak{A}_n,\mathfrak{B}_n)}$ is a PPT map acting on a qubit system, the sum $\Phi^{(i)}_{(\mathfrak{A},\mathfrak{B})}$ is entanglement breaking, for $i=1,2$.

\section{Perspectives and open questions}\label{sec:open}

Let us first emphasize that, in light of Proposition \ref{prop:PPT2}, the main result in Theorem \ref{thm:PPT2-C-DUC} can be interpreted as follows. Assume Alice, Bob, and Charlie share a state $\rho \otimes \sigma$ as in Figure~\ref{fig:swap}, with $\rho, \sigma\in \M{d}\otimes \M{d}$ being PPT and (conjugate) local diagonal unitary invariant in the sense of Remark~\ref{remark:DOC-LDOI}. Alice acts on her system with a (C)DUC map $\Phi_A$; similarly, Bob acts on his system with a (C)DUC map $\Psi_B$. Charlie then postselects on his bipartite system, the maximally entangled state $\ket{e}$. Then, Theorem \ref{thm:PPT2-C-DUC} says that the resulting state of Alice and Bob is separable. Note however that the local actions of Alice, Bob, and Charlie are restricted; a general result, in the sense of point (3) of Proposition \ref{prop:PPT2} does not hold in our diagonal-covariant setting. 

The main question left open in this work is the PPT$^2$ question for DOC maps. We conjecture that answer is the same as for (C)DUC maps. 
\begin{conjecture} \label{conjecture}
The composition of two arbitrary \emph{PPT} maps in $\DOC_d$ is entanglement breaking. 
\end{conjecture}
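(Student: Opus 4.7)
The plan is to reduce the PPT$^2$ claim to a factor-width question for a single explicit matrix, then produce the factor-width-2 decomposition by hand. First, Proposition~\ref{prop:DUC-CDUC-composition} identifies the composition $\Phi^{(i)}_{(A,B)} \circ \Phi^{(j)}_{(C,D)}$ with a (C)DUC map parameterized by some pair $(\mathfrak{A},\mathfrak{B}) = (A,B) \circ_r (C,D)$ with $r \in \{1,2\}$. Since by Remark~\ref{remark:DUC-CDUC-composition} the two composition rules agree up to transposing the second $B$-matrix, and since transposing a hermitian $B$-matrix preserves both the PPT class and the PCP class on $\MLDUI{d}$, it suffices to treat the case $r=2$, giving $(\mathfrak{A},\mathfrak{B}) = (AC,\, B \odot D + \operatorname{diag}(AC - B \odot D))$. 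By Proposition~\ref{prop:PPT-ent-DUC/CDUC}(4), the entanglement-breaking property of the composition is equivalent to $(\mathfrak{A},\mathfrak{B}) \in \PCP_d$. Moreover, since composition of PPT maps is trivially PPT, Proposition~\ref{prop:PPT-ent-DUC/CDUC}(3) guarantees that $(\mathfrak{A},\mathfrak{B})$ already satisfies the necessary membership conditions of Lemma~\ref{lemma:pcp-necessary}.

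With these reductions in place, Theorem~\ref{theorem:PCP-2} shows that it is enough to prove $\mathfrak{B} \in \PSD_d^2$. I would verify this using the explicit decomposition
$$\mathfrak{B} = \operatorname{diag}(A \odot C) + \sum_{1 \leq i < j \leq d} \left(\begin{array}{cc} A_{ij}C_{ji} & B_{ij}D_{ij} \\ B_{ji}D_{ji} & A_{ji}C_{ij} \end{array}\right)_{i,j \in [d]}$$
in the $2 \times 2$-embedding notation of Lemma~\ref{lemma:B-DD}. An entrywise check confirms that the off-diagonals reproduce $(B \odot D)_{ij}$ while the diagonal contributions telescope to $(AC)_{ii} = \sum_k A_{ik}C_{ki}$. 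Each embedded $2 \times 2$ block is hermitian (from hermiticity of $B,D$), has non-negative diagonal entries (from $A,C \in \EWP_d$), and satisfies the determinant inequality $A_{ij}C_{ji} \cdot A_{ji}C_{ij} \geq |B_{ij}D_{ij}|^2$, which follows by multiplying the PPT bounds $A_{ij}A_{ji} \geq |B_{ij}|^2$ and $C_{ij}C_{ji} \geq |D_{ij}|^2$. Hence each summand is positive semi-definite with all rank-one vectors supported on at most two coordinates, proving $\mathfrak{B} \in \PSD_d^2$ and therefore $(\mathfrak{A},\mathfrak{B}) \in \PCP_d^2 \subset \PCP_d$.

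I do not expect a substantial obstacle, since the factor-width-2 machinery from Section~\ref{sec:factor-width} is tailored for exactly this calculation; experimenting with the small-$d$ cases $d=2,3$ essentially writes the decomposition down and extrapolates the general pattern. The only mildly delicate step is bookkeeping across the four index pairings $(i,j) \in \{1,2\}^2$ and the two composition rules, which is cleanly absorbed by Remark~\ref{remark:DUC-CDUC-composition}. As an attractive bonus, the argument actually yields the stronger conclusion that the composition corresponds to a factor-width-2 PCP pair: at the level of maps, the Choi matrix decomposes into a sum of PPT (C)DUC Choi matrices supported on $2 \otimes 2$ subsystems, each of which is entanglement breaking by the qubit PPT-implies-EB theorem.
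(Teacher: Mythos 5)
You have not proved the statement in question. The statement is Conjecture~\ref{conjecture}, which concerns arbitrary \emph{PPT} maps in $\DOC_d$, and it is left \emph{open} in the paper --- there is no proof of it to compare against. What your argument establishes is Theorem~\ref{thm:PPT2-C-DUC}, the PPT$^2$ property for maps in $\DUC_d$ and $\CDUC_d$; indeed your decomposition of $\mathfrak{B}$ and the appeal to Theorem~\ref{theorem:PCP-2} reproduce the paper's proof of that theorem essentially verbatim. But $\DUC_d$ and $\CDUC_d$ sit inside $\DOC_d$ as the special cases $\Phi^{(1)}_{(A,B)} = \Phi^{(3)}_{(A,\operatorname{diag}A,B)}$ and $\Phi^{(2)}_{(A,B)} = \Phi^{(3)}_{(A,B,\operatorname{diag}A)}$; a general DOC map carries a genuine third matrix, and your proposal never engages with it.

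Concretely, for two DOC maps the composition is $\Phi^{(3)}_{(\mathfrak{A},\mathfrak{B},\mathfrak{C})}$ with $\mathfrak{B} = B_1 \odot B_2 + C_1\odot C_2^\top + \operatorname{diag}(\cdots)$ and $\mathfrak{C} = B_1\odot C_2 + C_1\odot B_2^\top + \operatorname{diag}(\cdots)$, so the $B$- and $C$-parts mix, and by Proposition~\ref{prop:PPT-ent-DOC}(4) entanglement breaking requires $(\mathfrak{A},\mathfrak{B},\mathfrak{C})\in \TCP_d$, not merely a PCP condition on a pair. Two steps of your strategy then break down. First, there is no analogue of Theorem~\ref{theorem:PCP-2} for the $\TCP_d^2$ cone: the paper points to triples $(A,B,C)$ for which $(A,B)$ and $(A,C)$ each satisfy the hypotheses of Lemma~\ref{lemma:AB-DD} yet $(A,B,C)\notin\TCP_d$. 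Second, even the factor-width bound itself fails in the DOC setting: the paper reports numerical evidence of PPT triples whose composition has factor width strictly greater than $2$ (already for triples of the form $(A,B,B)$). So the reduction ``composition is PPT $\Rightarrow$ factor width $2$ $\Rightarrow$ entanglement breaking'' that powers the (C)DUC case is not available here, and a proof of Conjecture~\ref{conjecture} would need genuinely new sufficient conditions for membership in $\TCP_d$.
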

We recall (see \cite[Lemma 9.3]{Singh2020diagonal}) the composition rule for DOC maps. Given two triples of matrices $(A_1,B_1,C_1), (A_2,B_2,C_2)\in \MLDOI{d}$, one has 
$$\Phi^{(3)}_{(A_1,B_1,C_1)} \circ \Phi^{(3)}_{(A_2,B_2,C_2)} = \Phi^{(3)}_{(\mathfrak{A},\mathfrak{B},\mathfrak{C})},$$ where 
\begin{align*}
    \mathfrak{A} &= A_1 A_2 \\ 
    \mathfrak{B} &= B_1 \odot B_2 + C_1\odot C_2^\top + \operatorname{diag}(A_1 A_2 - 2A_1\odot A_2) \\
    \mathfrak{C} &= B_1\odot C_2 + C_1\odot B_2^\top + \operatorname{diag}(A_1 A_2 - 2A_1\odot A_2).
\end{align*}

Hence, in terms of matrix triples, Conjecture~\ref{conjecture} posits that $(\mathfrak{A},\mathfrak{B},\mathfrak{C})\in \TCP_d$ for arbitrary matrix triples $(A_i,B_i,C_i)\in \MLDOI{d}$ such that $\Phi^{(3)}_{(A_i,B_i,C_i)}$ is PPT, for $i=1,2$. 

We would like to point out that even if the conjecture above holds in full generality, we do not have evidence for a stronger conclusion, as is the case with (C)DUC maps. Indeed, we have numerical evidence for the existence of PPT triples $(A_1, B_1, C_1), (A_2, B_2, C_2)$ (see Proposition \ref{prop:PPT-ent-DOC}) such that their compositions $(\mathfrak A, \mathfrak B, \mathfrak C)$ have factor width $>2$. More precisely, we have found PPT matrix pairs $(A,B) \in \MLDUI{d}$ such that $(\mathfrak{A},\mathfrak{B})\notin \PCP_d^2$, where $(\mathfrak{A},\mathfrak{B},\mathfrak{B})$ is obtained by composing the triple $(A,B,B)$ with itself in the above fashion. Hence, in order to prove the PPT$^2$ conjecture for the more general class of DOC maps, one likely requires stronger criterion for separability, in terms of sufficient conditions for membership in both the $\PCP_d$ and $\TCP_d$ cones. 

\bigskip

\noindent \textit{Acknowledgements.} We thank Alexander M\"uller-Hermes for valuable correspondence on the PPT$^2$ conjecture, and Salman Beigi and Ludovico Lami for directing our attention to the reference \cite{ringbauer2018certification}, where the notion of factor width has been employed to study coherence of quantum states. 

\bibliographystyle{alpha}
\bibliography{references}

\newcommand{\etalchar}[1]{$^{#1}$}
\begin{thebibliography}{CMHW18}

\bibitem[AM19]{ahmadi2019dsos}
Amir~Ali Ahmadi and Anirudha Majumdar.
\newblock Dsos and sdsos optimization: more tractable alternatives to sum of
  squares and semidefinite optimization.
\newblock {\em SIAM Journal on Applied Algebra and Geometry}, 3(2):193--230,
  2019.

\bibitem[BC75]{Barker1975diagonal}
George~Phillip Barker and David Carlson.
\newblock Cones of diagonally dominant matrices.
\newblock {\em Pacific J. Math.}, 57(1):15--32, 1975.

\bibitem[BCHW15]{bauml2015limitations}
Stefan B{\"a}uml, Matthias Christandl, Karol Horodecki, and Andreas Winter.
\newblock Limitations on quantum key repeaters.
\newblock {\em Nature communications}, 6(1):1--5, 2015.

\bibitem[BCPT05]{Boman2005factor}
Erik~G. Boman, Doron Chen, Ojas Parekh, and Sivan Toledo.
\newblock On factor width and symmetric h-matrices.
\newblock {\em Linear Algebra and its Applications}, 405:239--248, August 2005.

\bibitem[BSM03]{berman2003completely}
Abraham Berman and Naomi Shaked-Monderer.
\newblock {\em Completely positive matrices}.
\newblock World Scientific, 2003.

\bibitem[CF17]{christandl2017private}
Matthias Christandl and Roberto Ferrara.
\newblock Private states, quantum data hiding, and the swapping of perfect
  secrecy.
\newblock {\em Physical review letters}, 119(22):220506, 2017.

\bibitem[Cho75]{choi1975}
Man-Duen Choi.
\newblock Positive semidefinite biquadratic forms.
\newblock {\em Linear Algebra and its Applications}, 12(2):95 -- 100, 1975.

\bibitem[Chr12]{PPTsq}
M.~Christandl.
\newblock P{PT} square conjecture.
\newblock {\em Banff International Research Station Workshop: \emph{Operator
  Structures in Quantum Information Theory}}, 2012.

\bibitem[CK07]{Chruscinski2007choi}
Dariusz Chruściński and Andrzej Kossakowski.
\newblock On the structure of entanglement witnesses and new class of positive
  indecomposable maps.
\newblock {\em Open Systems \& Information Dynamics}, 14(03):275--294, 2007.

\bibitem[CMHW18]{Christandl2018}
Matthias Christandl, Alexander M{\"u}ller-Hermes, and Michael~M. Wolf.
\newblock When do composed maps become entanglement breaking?
\newblock {\em Annales Henri Poincar{\'e}}, 20:2295--2322, 2018.

\bibitem[CMR18]{Chruscinski2018choi}
Dariusz Chruscinski, Marcin Marciniak, and Adam Rutkowski.
\newblock Generalizing {C}hoi-like maps.
\newblock {\em Acta Mathematica Vietnamica}, 43:661--674, 2018.

\bibitem[CYT19]{Chen2019}
Lin Chen, Yu~Yang, and Wai-Shing Tang.
\newblock Positive-partial-transpose square conjecture for $n=3$.
\newblock {\em Phys. Rev. A}, 99:012337, Jan 2019.

\bibitem[CYZ18]{collins2018ppt}
Beno{\^\i}t Collins, Zhi Yin, and Ping Zhong.
\newblock The {PPT} square conjecture holds generically for some classes of
  independent states.
\newblock {\em Journal of Physics A: Mathematical and Theoretical},
  51(42):425301, 2018.

\bibitem[GKS20]{girard2020convex}
Mark Girard, Seung-Hyeok Kye, and Erling St{\o}rmer.
\newblock Convex cones in mapping spaces between matrix algebras.
\newblock {\em Linear Algebra and its Applications}, 608:248--269, 2020.

\bibitem[Ha03]{Ha2003choi}
Kil-Chan Ha.
\newblock A class of atomic positive linear maps in matrix algebras.
\newblock {\em Linear Algebra and its Applications}, 359(1):277 -- 290, 2003.

\bibitem[HJ12]{Horn2012matrix}
R.A. Horn and C.R. Johnson.
\newblock {\em Matrix Analysis}.
\newblock Cambridge University Press, 2012.

\bibitem[HRF20]{hanson2020eventually}
Eric~P Hanson, Cambyse Rouz{\'e}, and Daniel~Stilck Fran{\c{c}}a.
\newblock Eventually entanglement breaking markovian dynamics: Structure and
  characteristic times.
\newblock {\em Ann. Henri Poincar{\'e}}, 21:1517--1571, 2020.

\bibitem[HSR03]{horodecki2003entanglement}
Michael Horodecki, Peter~W Shor, and Mary~Beth Ruskai.
\newblock Entanglement breaking channels.
\newblock {\em Reviews in Mathematical Physics}, 15(06):629--641, 2003.

\bibitem[JM19]{johnston2019pairwise}
Nathaniel Johnston and Olivia MacLean.
\newblock Pairwise completely positive matrices and conjugate local diagonal
  unitary invariant quantum states.
\newblock {\em Electronic Journal of Linear Algebra}, 35:156--180, 2019.

\bibitem[JY20]{jin2020investigation}
Ryan Jin and Yu~Yang.
\newblock Investigation of the {PPT} squared conjecture for high dimensions.
\newblock {\em preprint arXiv:2010.15554}, 2020.

\bibitem[KMP17]{Kennedy2017}
Matthew Kennedy, Nicholas Manor, and Vern Paulsen.
\newblock Composition of {PPT} maps.
\newblock {\em Quantum Information and Computation}, 18, 10 2017.

\bibitem[LG15]{Lami2015entanglebreak}
L.~Lami and V.~Giovannetti.
\newblock Entanglement–breaking indices.
\newblock {\em Journal of Mathematical Physics}, 56(9):092201, Sep 2015.

\bibitem[NS21]{nechita2021graphical}
Ion Nechita and Satvik Singh.
\newblock A graphical calculus for integration over random diagonal unitary
  matrices.
\newblock {\em Linear Algebra and its Applications}, 613:46 -- 86, 2021.

\bibitem[PP02]{paulsen2002schur}
V.~Paulsen and Cambridge~University Press.
\newblock {\em Completely Bounded Maps and Operator Algebras}.
\newblock Cambridge Studies in Advanced Mathematics. Cambridge University
  Press, 2002.

\bibitem[RBC{\etalchar{+}}18]{ringbauer2018certification}
Martin Ringbauer, Thomas~R Bromley, Marco Cianciaruso, Ludovico Lami, WY~Sarah
  Lau, Gerardo Adesso, Andrew~G White, Alessandro Fedrizzi, and Marco Piani.
\newblock Certification and quantification of multilevel quantum coherence.
\newblock {\em Physical Review X}, 8(4):041007, 2018.

\bibitem[RJP18]{Rahaman2018}
Mizanur Rahaman, Samuel Jaques, and Vern Paulsen.
\newblock Eventually entanglement breaking maps.
\newblock {\em Journal of Mathematical Physics}, 59, 01 2018.

\bibitem[Sin21]{Singh2020entanglement}
Satvik Singh.
\newblock Entanglement detection in triangle-free quantum states.
\newblock {\em Phys. Rev. A}, 103:032436, Mar 2021.

\bibitem[SN21]{Singh2020diagonal}
Satvik Singh and Ion Nechita.
\newblock Diagonal unitary and orthogonal symmetries in quantum theory.
\newblock {\em {Quantum}}, 5:519, August 2021.

\end{thebibliography}

\bigskip
\hrule
\bigskip
\end{document}